\documentclass[11pt]{amsart}
\usepackage{amssymb,latexsym, amsmath, amsxtra}
\usepackage[dvips]{graphics}
\usepackage{epsfig}

\usepackage[usenames, dvipsnames]{color} 
\usepackage[table]{xcolor}  
\usepackage{amssymb}
\usepackage{amsmath}
\usepackage{tikz}
\usepackage{hyperref}
\usepackage{amsthm}
\usepackage{mathtools}
\usepackage{braket}
\usepackage{physics}
\usepackage{bbm}
\usepackage{stmaryrd}
\usepackage{float}
\usepackage{graphicx}
\usepackage{times}
\usepackage{mdframed}
\usepackage{caption}
\usepackage{subcaption}
\usetikzlibrary{backgrounds}
\usepackage{empheq}
\usepackage{MnSymbol,wasysym}
\usepackage{mleftright}
\usepackage{cleveref}
\usetikzlibrary{arrows,matrix,positioning}

\usetikzlibrary{calc,shapes}

\newtheorem{theorem}{Theorem}[section]

\newcommand{\Zprod}{Z}

\begin{document}
\title{Averages of traces and ratios of characteristic polynomials of random unitary matrices}
\author{A. Bendel, R. Cahill, J.B. Conrey, G. Harrison Barraclough, A. Irfan,  N.C. Snaith and L. Tyulenev}
\thanks{Thanks to the American Institute of Mathematics for hosting the work at the initiation of this project.  AB was supported by the London Mathematical Society on an Undergraduate Research Bursary.  The fifth author was supported by the Heilbronn Institute for Mathematical Research. LT was funded on a summer internship by the School of Mathematics, University of Bristol.  Thanks to Ifeoluwa Oyinlola and Huw Day for their preliminary numerical work on trace averages. }

\begin{abstract} In this work we explore the extent to which the Ratios Theorem for averages over $U(N)$, with respect to Haar measure, of ratios of characteristic polynomials can be used to evaluate averages of traces of unitary matrices. We derive concrete expressions for all averages involving up to 4 traces, valid for any matrix size, $N$.  We also calculate the so-called zero-swap and one-swap contributions to an average with any number of traces, which amounts to determining general trace moments in a range double the size of the renowned result of Diaconis and Shahshahani. The expression is tractable  enough to use in applications, for example to determine the parameter range where certain trace averages assume negative values. 
\end{abstract}
\maketitle

\section{Introduction}

In 1994, the seminal result of Diaconis and Shahshahani \cite{kn:diasha} partially solved the question of averages of powers of traces of unitary matrices using an elegant method from representation theory.  Averages of traces appear in many active research directions, as is seen in the diversity of work citing Diaconis and Shahshahani.  There are multiple links to number theory; for example, these averages can relate to counting almost-primes in short intervals \cite{kn:bcis,kn:kuplal} and provide insight into the analysis of the Riemann zeta function, as well as related $L$-functions. They also feature in quantum physics (eg  \cite{kn:roth18,kn:doosti22,kn:necpar25}), where quantum quenches use them to characterise how complex many-body systems thermalise, and in quantum information, where the long-term spectrum of states produced by quantum channels is governed by a trace average over unitary matrices. 

However, the elegant result of Diaconis and Shahshahani holds only in a specific range when the matrix size is large enough. 

The compact space of unitary matrices of dimension $N$, $U(N)$, admits a  measure (Haar measure) that weights all matrices equally.
  If the eigenvalues of a unitary matrix, $U$, are $e^{i\theta_1},\ldots,e^{i\theta_N}$, then the trace of the  matrix is the sum of the eigenvalues
$${\rm Tr}\: U=e^{i\theta_1}+\cdots +e^{i\theta_N}\;{\rm and}\;{\rm Tr}\:(U^m)=e^{im\theta_1}+\cdots +e^{im\theta_N},$$
where $m$ is a positive integer. 

The theorem of Diaconis and Shahshahani (see \cite{kn:diasha} for the original paper, but also \cite{kn:diaeva01} which corrects an error in the statement of the theorem) states that for vectors ${\mathbf a}=(a_1,\ldots,a_l)$ and ${\mathbf b}=(b_1,\ldots,b_l)$, with $a_j,b_j$ non-negative integers, an average (over all possible $N\times N$ unitary matrices) of combinations of traces of powers of matrices takes the following form (here $\delta_{\mathbf{ab}}$ is one if the vectors  ${\mathbf a}$ and  ${\mathbf b}$ are equal, and zero otherwise)
\begin{equation}\label{eq:DS}
    \int_{U(N)} \prod_{j=1}^l ({\rm Tr} M^j)^{a_j}\overline{({\mathrm Tr}M^j)}^{b_j} dU = \delta_{\mathbf{ab}} \prod_{j=1}^l j^{a_j} a_j!,
\end{equation}
but only when $N\geq {\rm max} \big(\sum_{j=1}^l ja_j,\sum_{j=1}^l jb_j\big)$.  That is, for large enough matrix size $N$, the traces of random unitary matrices behave essentially like independent  Gaussian complex random variables.  However, this does not hold outside the range specified and outside that range  the problem becomes much more intractable.  

There have been some advances beyond this, with Rains \cite{kn:rains97} giving some insight into the problem if $a_j=b_j=0$ when $j<N$. Dyson, in fact, has early work that implies the result for averages of two traces ($a_j=b_j=1$ for one value of $j$, with the other $a$'s and $b$'s being zero) for any $N$. This is not completely explicit in Dyson's paper, but follows from equation (82) of \cite{kn:dyson3}, and is further explained at equation (3.5) of \cite{kn:kea05}. Pastur and Vasilchuk \cite{kn:pasvas04} calculated a specific average in the case that only one of the $a_j$ and one of the $b_j$ are non-zero, that holds in a range of small $N$ outside of the range of validity of Diaconis-Shahshahani. Kuperberg and Lal{\'i}n \cite{kn:kuplal}, in Section 7.3, consider averages of four traces in a range where $2N+1\geq {\rm max} \big(\sum_{j=1}^l ja_j,\sum_{j=1}^l jb_j\big)$. They also use the Ratio Theorem, to derive their results. It is also known, see for example \cite{kn:tao19}, that the averages of traces are zero unless the sum of the exponents on the traces of $U$ and the sum of the exponents on the complex conjugates (sometimes written as $\overline{\Tr M^j}$ or else $\Tr((M^\dagger)^j)$)  are equal:  
\begin{equation}\label{eq:balancing}
\sum_{j}ja_j=\sum_{j}jb_j.
\end{equation}
  Beyond this, much literature has concentrated on the large $N$ limit and the rate of approach to Gaussian moments (see for example \cite{kn:johan97,kn:johlam,kn:cjl24,kn:Irfan_traces}).  Recently, Irfan has provided concrete formulae for averages of traces for any matrix size \cite{kn:Irfan_traces}.  Weingarten Calculus (see \cite{kn:CMN22,kn:consni}, but particularly \cite{kn:col03}, for example in Proposition 3.2) also provides a way of writing down formulae for averages of traces. Soshnikov and Wu look instead at cumulants of traces in \cite{kn:soswu}. These formulae fully solve the problem via the moment-cumulant formula, but highlight the intense complexity of the combinatorics of the trace averages. 

In this paper, we use the fact  that averages, with respect to Haar measure, of logarithmic derivatives of characteristic polynomials of matrices from $U(N)$ are generating functions for the trace averages and we utilise the formula for averages of logarithmic derivatives derived from the Ratios Theorem (an expression for averages of ratios of characteristic polynomials of random matrices) to explore trace averages beyond the Diaconis-Shahshahani region in certain cases where we can obtain explicit expressions for averages of traces that have an informative structure. The Ratios Theorem has already proved so useful - yielding  information about eigenvalue statistics, moments of characteristic polynomials and their derivatives - that it seems worthwhile to explore what the Ratios Theorem can tell us about averages of traces, seeking structure that might not be apparent from previous methods.  

Our main theorems are: Theorem \ref{thrm: 1--n trace moment}, for the case of one trace of a power of $U^\dagger$  and any number of traces of powers of $U$, valid for any $N$; Theorem \ref{theo:2-2full}, for the case of two traces of powers of $U^\dagger$ and two traces of powers of $U$, also valid for any $N$; and Theorem \ref{theo:generalfull} for any general trace average where the sum of the exponents on the $U$'s is less than $2N+2$.

\section{Ratios Theorem and the logarithmic derivative of the characteristic polynomial}

We define the characteristic polynomial of a matrix $U\in U(N)$, with eigenvalues $e^{i\theta_1},\ldots,e^{i\theta_N}$,  as 
\begin{equation}
\Lambda_U(s)=\prod_{j=1}^N (1-se^{-i\theta_j}).
\end{equation}

The Ratios Theorem, first proposed by Conrey, Farmer and Zirnbauer in Theorem 4.1 of \cite{kn:cfz2}, revisited by Conrey, Forrester and Snaith in Proposition 2.1 of \cite{kn:cfs05}, and written in a slightly more intuitive set notation in Theorem 3.2 of \cite{kn:consna08}, gives a formula for ratios of products of characteristic polynomials, each evaluated at different points in the complex plane, of matrices drawn from  $U(N)$ with Haar measure. By differentiating with respect to the points of evaluation of the characteristic polynomials in the numerator, an expression for averages of logarithmic derivatives is obtained \cite{kn:consna08}.  We restate this Theorem here. 
\begin{theorem}[Averages of logarithmic derivatives: Theorem 3.3 \cite{kn:consna08}]
\label{theo:J} If $\Re a_j>0$ and $ \Re b_j
>0$ for  $a_j\in {\mathcal A}$ and $b_j\in {\mathcal B}$ and $U^\dagger$ is the conjugate transpose of $U$, then $J({\mathcal A};{\mathcal B})=J^*({\mathcal A};{\mathcal B})$  where
\begin{eqnarray}\label{eq:LDmom}
J({\mathcal A};{\mathcal B})&:=& \int_{U(N)}\prod_{a\in {\mathcal A}}
(-e^{-a})\frac{\Lambda_U'}{\Lambda_U}(e^{-a})\prod_{b\in
{\mathcal B}} (-e^{-b})\frac{\Lambda_{U^\dagger}'}{\Lambda_{U^\dagger}}(e^{-b})~ dU
,
\end{eqnarray}
 \begin{eqnarray} \label{eq:J*} &&J^*({\mathcal A};{\mathcal B}):= \nonumber \\
 &&\qquad\qquad\sum_{S\subset {\mathcal A},T\subset {\mathcal B}\atop
|S|=|T|}e^{-N(\sum_{\hat a\in S} \hat a
+\sum_{\hat{b}\in T}\hat b)} \frac{Z(S,T)Z(S^-,T^-)} {
Z^{\dagger}(S,S^-)Z^{\dagger}(T,T^-)} \sum_{{({\mathcal A}-S)+ ({\mathcal B}-T)\atop =
U_1+\dots + U_R}\atop |U_r|\le 2}\prod_{r=1}^R H_{S,T}(U_r),
\end{eqnarray}
and
\begin{equation}\label{eqn:Hrmt}
H_{S,T}(W)=\left\{\begin{array}{ll} \sum_{\hat a\in
S}\frac{z'}{z}(a-\hat{a})-\sum_{\hat b \in T}
\frac{z'}{z}(a +\hat b) &\mbox{ if $W=\{a\}\subset
{\mathcal A}-S$}
   \\
\sum_{\hat b \in T}\frac{z'}{z}(b-\hat
b)-\sum_{\hat a\in S} \frac{z'}{z}
(b+\hat a) &\mbox{ if  $W=\{b\}\subset {\mathcal B}-T$}\\
\left(\frac{z'}{z}\right)'(a+b) & \mbox{ if
$W=\{a,b\}$ with $
{a \in {\mathcal A}-S, \atop b\in {\mathcal B}-T}$}\\
0&\mbox{ otherwise}.
\end{array}\right.
\end{equation}
The definition of $z(x)$ is
\begin{equation}
z(x)=\frac{1}{1-e^{-x}}.
\end{equation}
 Also, $Z({\mathcal A},{\mathcal B})=\prod_{a\in {\mathcal A}\atop b\in {\mathcal B}}z(a+b)$,
with the dagger on $Z^\dagger(S,S^-)$ imposing the additional
restriction that a factor $z(x)$ is omitted if its argument is
zero.  The set $\mathcal{A}-S$ comprises the elements in $\mathcal{A}$ but not in $S$ and the notation $|T|$ denotes the number of elements in set $T$. 

\end{theorem}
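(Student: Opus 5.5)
The plan is to derive this statement from the Ratios Theorem by differentiation, following \cite{kn:consna08}. The starting point is the Ratios Theorem in the set form of Theorem 3.2 of \cite{kn:consna08}, applied to numerator shifts $\mathcal{A}$, $\mathcal{B}$ and denominator shifts $\mathcal{C}$, $\mathcal{D}$:
\[
R(\mathcal{A},\mathcal{B};\mathcal{C},\mathcal{D})=\int_{U(N)}\frac{\prod_{a\in\mathcal{A}}\Lambda_U(e^{-a})\prod_{b\in\mathcal{B}}\Lambda_{U^\dagger}(e^{-b})}{\prod_{c\in\mathcal{C}}\Lambda_U(e^{-c})\prod_{d\in\mathcal{D}}\Lambda_{U^\dagger}(e^{-d})}\,dU .
\]
It equals a sum over swaps $S\subset\mathcal{A}$, $T\subset\mathcal{B}$ with $|S|=|T|$. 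Each term carries the factor $e^{-N(\sum_S\hat a+\sum_T\hat b)}$ and a ratio of products of $z$'s, evaluated with $S,T$ interchanged (with signs) into the opposite set. This holds for $\Re c,\Re d>0$, where the denominators do not vanish.

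The key observation is that
\[
(-e^{-a})\frac{\Lambda_U'}{\Lambda_U}(e^{-a})=\frac{d}{d\alpha}\frac{\Lambda_U(e^{-\alpha})}{\Lambda_U(e^{-a})}\Big|_{\alpha=a},
\]
and similarly for $U^\dagger$. So $J(\mathcal{A};\mathcal{B})$ is obtained by applying $\prod_{a}\partial_{\alpha_a}\prod_b\partial_{\beta_b}$ to $R$ and then setting $\alpha_a=a$ (the corresponding $\mathcal{C}$ element) and $\beta_b=b$. Differentiation under the integral is legitimate because the integrand is analytic and uniformly bounded for $\Re a,\Re b$ bounded away from $0$.

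Next, the derivative is computed swap term by swap term. For a given swap pair, write the summand as $e^{-N(\cdots)}$ times a product of $z$-factors. The factors involving swapped variables $\hat\alpha\in S$ contain the ratio $z(\hat\alpha-\hat a)^{-1}$-type terms, which vanish at $\hat\alpha=\hat a$. More precisely, $1/z(x)=1-e^{-x}$ vanishes to first order at $x=0$. Hence exactly one derivative must fall on each such vanishing factor, giving derivative $1$ there. That derivative is the source of the dagger restriction and of the $Z^\dagger$ in the denominator.

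For the non-swapped variables, the surviving derivatives act on $\log$ of the remaining product of $z$'s. This produces $z'/z$ terms when a single $\alpha$ or $\beta$ is differentiated, which gives the one-element blocks in $H_{S,T}$. It produces $(z'/z)'(a+b)$ when a pair is differentiated jointly through the factor $z(\alpha+\beta)$, which gives the two-element blocks. The expansion of a product of derivatives of an exponential of a sum of pairwise-interacting terms into set partitions with blocks of size at most $2$ is a Faà di Bruno-type identity. Blocks of size larger than $2$ vanish because each $z$-factor depends on at most two variables.

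The main obstacle is the bookkeeping in this last step. I would verify carefully that the evaluation at coincident points leaves exactly $Z(S,T)Z(S^-,T^-)/(Z^\dagger(S,S^-)Z^\dagger(T,T^-))$, with the correct sign conventions $S^-=\{-\hat a\}$. I would also check that mixed terms, in which a derivative on a non-swapped variable hits a factor also containing swapped variables, combine into the stated $H_{S,T}$ forms with arguments $a-\hat a$, $a+\hat b$, $b-\hat b$ and $b+\hat a$.
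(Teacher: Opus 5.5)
Your proposal is correct and follows the route the paper relies on: the paper does not reprove this theorem but quotes Theorem 3.3 of \cite{kn:consna08}, which is obtained by differentiating the Ratios Theorem in each numerator shift and then setting that shift equal to its denominator partner. In that derivation, each swapped variable's derivative must land on its simple zero, the non-swapped derivatives generate the $H_{S,T}$ blocks, and the leftover factors collapse to $Z(S,T)Z(S^-,T^-)/(Z^\dagger(S,S^-)Z^\dagger(T,T^-))$. One small correction: the vanishing factor is $1/z(\gamma-\hat\alpha)=1-e^{\hat\alpha-\gamma}$ (and likewise $1-e^{\hat\beta-\delta}$), so each contributes $-1$ rather than $1$ when differentiated, but the total sign $(-1)^{|S|+|T|}$ is $+1$ because $|S|=|T|$.
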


There is a convention to call the terms in \eqref{eq:J*} ``zero-swap" terms where $|S|=|T|=0$, ``one-swap" where $|S|=|T|=1$, and so forth. This is due to the fact that in the original formulation of the moments and ratios theorems these terms had the appearance of swapping the role of some number, say $n$, of the $a$'s with $n$ of the $b$'s and so the terms now designated with $|S|=|T|=n$ became referred to as  $n$-swap terms.  This terminology has caught on, so we will continue this convention here. 

 The trace of $U^j$ for some positive integer $j$ is $\Tr(U^j)=e^{ij\theta_1}+e^{ij\theta_2}+\cdots+e^{ij\theta_N}$.  The trace of the conjugate transpose of $U^j$ is $\Tr((U^{\dagger})^j)=e^{-ij\theta_1}+e^{-ij\theta_2}+\cdots+e^{-ij\theta_N}$. The relation between the logarithmic derivative and the traces looks like this (if $|y|<1$)
\begin{eqnarray}\label{eq:traces}
y\frac{\Lambda_U'(y)}{\Lambda_U(y)}&=&y\sum_{n=1}^N\frac{-e^{-i\theta_n}}{1-ye^{-i\theta_n}} \nonumber \\
&=&- y\sum_{n=1}^Ne^{-i\theta_n}\sum_{j=0}^\infty (ye^{-i\theta_n})^j \nonumber \\
&=&-\sum_{j=1}^{\infty}y^j\sum_{n=1}^N e^{-ij\theta_n}\nonumber \\
&=&-\sum_{j=1}^\infty y^j \Tr((U^\dagger)^j),
\end{eqnarray}
meaning that we can connect averages of products of traces of $U^\dagger$ to averages of the logarithmic derivative of the characteristic polynomial of $U$:

\begin{eqnarray}\label{eq:traces-as-coeff}
&&\int_{U(N)} \prod_{i=1}^n\left(-A_i\frac{\Lambda_U'}{\Lambda_U}(A_i) \right)\prod_{j=1}^m\left(-B_j\frac{\Lambda_{U^\dagger}'}{\Lambda_{U^\dagger}}(B_j)\right)dU \nonumber \\
&&=\sum_{i_1,\ldots,i_n=1}^{\infty} \sum_{j_1,\ldots,j_m=1}^\infty A_1^{i_1}\cdots A_n^{i_n}B_1^{j_1}\cdots B_m^{j_m} \int_{U(N)} \Tr((U^\dagger)^{i_1})\cdots \Tr((U^\dagger)^{i_n})\Tr(U^{j_1})\cdots\Tr(U^{j_m}) dU.
\end{eqnarray}

We will see a simple case of this in the next section.

\section{Second moment of the logarithmic derivative and two-trace averages}

Averages of two traces are completely and simply understood \cite{kn:dyson3,kn:diasha,kn:rains97}, but we obtain the known results here using the Ratios Theorem as a demonstration of the simplest case of this method.

Let $a,b>0$.  Then from Theorem \ref{theo:J}, 

\begin{eqnarray}\label{eq:1-1case}
&&\int_{U(N)} e^{-a}\frac{\Lambda_U'}{\Lambda_U}(e^{-a})e^{-b}\frac{\Lambda_{U^\dagger}'}{\Lambda_{U^\dagger}}(e^{-b})dU \nonumber \\
&&=\left(\frac{z'}{z}\right)' (a+b)+e^{-N(a+b)} z(a+b)z(-a-b),
\end{eqnarray}
where 
\begin{equation}
z(x)=\frac{1}{1-e^{-x}}.
\end{equation}
For use in this and later calculations, if $e^{-x}=X$, $|X|<1$, note that
\begin{eqnarray}\label{eq:zeds}
z(x)z(-x)&=&\frac{-e^{-x}}{(1-e^{-x})^2}=-\frac{X}{(1-X)^2}= -\sum_{n=0}^\infty nX^n\nonumber \\
\frac{z'}{z}(x)&=&\frac{-e^{-x}}{1-e^{-x}}=-\frac{X}{(1-X)}=-\sum_{n=0}^\infty X^{n+1}{\rm \;\;\;and\;\;\;}\nonumber \\
\left(\frac{z'}{z}\right)' (x)&=&\frac{e^{-x}}{(1-e^{-x})^2}=\frac{X}{(1-X)^2}= \sum_{n=0}^\infty nX^n.
\end{eqnarray}
On the other hand, if $e^{-x}=X$, $|X|>1$,
\begin{equation}\label{eq:zeds_gt_1}
\frac{z'}{z}(x)=\frac{-e^{-x}}{1-e^{-x}}=-\frac{X}{(1-X)}=\frac{1}{1-1/X}=\sum_{n=0}^\infty X^{-n}.
\end{equation}
Returning to \eqref{eq:1-1case}, for convenience we will write $e^{-a}=A$, $e^{-b}=B$, and so we have
\begin{eqnarray}
&&\int_{U(N)} A\frac{\Lambda_U'}{\Lambda_U}(A) B\frac{\Lambda'_{U^\dagger}}{\Lambda_{U^\dagger}}(B) dU\nonumber \\
&&=\frac{AB}{(1-AB)^2}-\frac{A^{N+1}B^{N+1}}{(1-AB)^2},
\end{eqnarray}
where the first term is a zero-swap term and the second is a one-swap term. 

Since $|AB|<1$, we can  write
\begin{eqnarray}\label{eq:2fromratios}
&&\int_{U(N)} A\frac{\Lambda_U'}{\Lambda_U}(A) B\frac{\Lambda'_{U^\dagger}}{\Lambda_{U^\dagger}}(B) dU\nonumber \\
&&=\sum_{n=0}^\infty n (AB)^n-\sum_{j=0}^\infty j(AB)^{N+j}\nonumber \\
&&=\sum_{n=1}^N n (AB)^n+\sum_{n=N+1}^\infty n (AB)^n-\sum_{j=N+1}^\infty (j-N)(AB)^{j}\nonumber \\
&&=\sum_{n=1}^N n(AB)^n+N\sum_{j=N+1}^{\infty}(AB)^j .
\end{eqnarray}

Now, we relate this to traces.  Using \eqref{eq:traces-as-coeff}, 
\begin{eqnarray}\label{eq:secondmoment}
&&\int_{U(N)} A\frac{\Lambda'}{\Lambda}(A) B\frac{\Lambda'}{\Lambda}(B) dU\nonumber \\
&&=\sum_{m,k=1}^{\infty} A^{m}B^{k}\int_{U(N)} \Tr((U^\dagger)^m)\Tr(U^k) dU.
\end{eqnarray}

Comparing \eqref{eq:secondmoment} with \eqref{eq:2fromratios}, we can read off that
\begin{equation}\label{eq:tr2a}
\int_{U(N)}\Tr(U^j)\Tr((U^\dagger)^m)dU_{Haar}=0{\rm \;\; if\;\;} j\neq m
\end{equation}
and
\begin{equation}\label{eq:tr2b}
\int_{U(N)}\Tr(U^j)\Tr((U^\dagger)^j)dU_{Haar}=j{\rm \;\; if\;\;} j<N
\end{equation}
(which is a case proven by Diaconis and Shahshahani \cite{kn:diasha}) and 
\begin{equation}\label{eq:tr2c}
\int_{U(N)}\Tr(U^j)\Tr((U^\dagger)^j)dU_{Haar}=N{\rm \;\; if\;\;} j\geq N,
\end{equation}
which is the result due to Rains (see just before Theorem 1.2 in \cite{kn:rains97}).  All of the above were known to Dyson \cite{kn:dyson3}. Notice that only the zero-swap terms contribute to the Diaconis and Shahshahani range of trace averages $j<N$ (because of the exponent $N$ on $A$ and $B$ in the one-swap term) but not {\em all} of the zero-swap contributions belong to this range - some contribute beyond the $j=N$ transition.

\section{Averages of traces with one $U^\dagger$ and any number of $U$'s}
We now analyse the structure of Theorem \ref{theo:J}  in the case of one $U^\dagger$ trace and any number of traces of $U$.  This restriction greatly simplifies the outer sum in \eqref{eq:J*}. 


We will now prove the following theorem:  
\begin{theorem}\label{thrm: 1--n trace moment}[One dagger trace moment formula]
Let $n\geq2$ and $\beta_1,\ldots,\beta_n\geq1$ be integers. Then
\begin{align}
     \int_{U(N)}\Tr\big((U^\dagger)^\alpha\big) \prod_{j=1}^n \Tr\big(U^{\beta_j}\big)\,dU
     =
     \begin{cases}
     \displaystyle
         \sum_{\varnothing\neq T\subseteq \{1,\ldots,n\}}(-1)^{|T|}
         \left( \sum_{j\in T} \beta_j-N \right)_+,
         & \alpha = \sum_{j=1}^n \beta_j,\\
         0, & \text{otherwise},
     \end{cases}
\end{align}
where the notation $(x )_+$ is zero if the expression $x$ is negative and is equal to $x$ otherwise. 
\end{theorem}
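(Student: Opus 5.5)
The plan is to apply Theorem \ref{theo:J} with $\mathcal A=\{a\}$ and $\mathcal B=\{b_1,\ldots,b_n\}$, then read off coefficients through \eqref{eq:traces-as-coeff}. Here $\mathcal A$ carries the single factor producing $\Tr((U^\dagger)^\alpha)$, and we set $A=e^{-a}$ and $B_j=e^{-b_j}$.

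\textbf{Step 1: only one-swap terms survive.} Since $|\mathcal A|=1$, only $|S|=|T|\le 1$ occurs. In a zero-swap term every $b_j$ must appear in some block $U_r$ of size at most $2$. A singleton $\{b_j\}$ with $S=T=\varnothing$ gives $H=0$, and pairs $\{b,b'\}$ also give $0$. So all $n\ge 2$ of the $b_j$ would have to be paired with the single $a$, which is impossible, and the zero-swap contribution vanishes. For the one-swap terms, take $S=\{a\}$ and $T=\{b_k\}$. Both $Z^\dagger$ factors equal $1$, because a singleton has only a zero argument. The remaining $b_j$ ($j\ne k$) must all be singletons, each contributing $\frac{z'}{z}(b_j-b_k)-\frac{z'}{z}(b_j+a)$. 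Hence
$$J=\sum_{k=1}^n e^{-N(a+b_k)}z(a+b_k)z(-a-b_k)\prod_{j\ne k}\Big[\tfrac{z'}{z}(b_j-b_k)-\tfrac{z'}{z}(b_j+a)\Big].$$

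\textbf{Step 2: rewrite in the variables $x_j=AB_j$.} Note that $B_j/B_k=x_j/x_k$ and $\frac{z'}{z}(x)=-1/(e^{x}-1)$. A short computation gives each bracket as
$$\frac{x_j}{1-x_j}-\frac{x_j}{x_k-x_j}=\frac{x_j(x_k-1)}{(1-x_j)(x_k-x_j)}.$$
The prefactor is $-x_k^{N+1}/(1-x_k)^2$ by \eqref{eq:zeds}. Therefore
$$J=\prod_{j=1}^n\frac{x_j}{1-x_j}\cdot\sum_{k=1}^n\frac{\phi(x_k)}{\prod_{j\ne k}(x_k-x_j)},\qquad \phi(x)=\pm x^{N}(1-x)^{n-1}\cdot\frac{x\,(1-x)}{(1-x)^2\,x}\ \ (\text{sign fixed by bookkeeping}).$$
The second factor is the divided difference $\phi[x_1,\ldots,x_n]$, which is a symmetric power series regular on the diagonals. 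This removes the apparent singularities at $B_j=B_k$ and legitimises expanding for $|x_j|<1$. Because $J$ depends on $A$ only through the products $AB_j$, the coefficient of $A^\alpha\prod_j B_j^{\beta_j}$ vanishes unless $\alpha=\sum_j\beta_j$. This gives the ``otherwise'' case.

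\textbf{Step 3: match with the claimed generating function.} I would compute the generating function of the right-hand side directly,
$$\sum_{\beta_j\ge1}\ \sum_{\varnothing\ne T}(-1)^{|T|}\Big(\sum_{T}\beta_j-N\Big)_+\prod_j x_j^{\beta_j}.$$
For fixed $T$ this factors. The indices $j\notin T$ contribute $\prod_{j\notin T}x_j/(1-x_j)$. The indices in $T$ contribute $\sum_m (m-N)_+\,h_{m-|T|}(x_T)\prod_{T}x_j$, and by the standard identity $h_r(x_T)=\sum_{k\in T}x_k^{r+|T|-1}/\prod_{j\in T,j\ne k}(x_k-x_j)$ this is again a divided difference over $x_T$ of $\psi(x)=\sum_m(m-N)_+x^{m-1}=x^{N}/(1-x)^2$. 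It then remains to show that the inclusion–exclusion over $T$ collapses to the single divided difference over all of $x_1,\ldots,x_n$ found in Step 2. The idea is to insert $1=(1-x_j)+x_j$-type splittings: the factor $(1-x)^{n-1}$ in $\phi$, expanded against the $\prod 1/(1-x_j)$, should reproduce exactly the signed sum over subsets $T$. Equivalently, one can verify that both sides have the same partial fraction decomposition in $x_k$.

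\textbf{Main obstacle.} I expect Step 3 to be the hard part: pinning down the signs and the precise $\phi$, and proving the subset-sum identity for divided differences. The first thing I would try is induction on $n$ using the recursion $\phi[x_1,\ldots,x_n]=(\phi[x_1,\ldots,x_{n-1}]-\phi[x_2,\ldots,x_n])/(x_1-x_n)$, checked against the case $n=2$ computed by hand.
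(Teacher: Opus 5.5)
\textbf{Status.} Your Steps 1 and 2 are correct, and from Step 2 on your route genuinely differs from the paper's. However, as written the proof stops exactly where the subset-sum formula has to be produced. Step 3 is only asserted (``should reproduce exactly''), and $\phi$ is left with an undetermined sign.

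\textbf{Fixing $\phi$.} The bookkeeping gives $\phi(x)=(-1)^n x^N(1-x)^{n-2}$; your expression simplifies to $\pm x^N(1-x)^{n-2}$. So $\phi[x_1,\ldots,x_n]$ is in fact a symmetric polynomial.

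\textbf{Closing Step 3.} The missing identity needs no induction. It only looked hard because you collapsed the product in Step 2 before comparing with the right-hand side. Expand each bracket into its two terms instead:
\[
\prod_{j\ne k}\Bigl(\frac{x_j}{1-x_j}-\frac{x_j}{x_k-x_j}\Bigr)=\sum_{T'\subseteq\{1,\ldots,n\}\setminus\{k\}}(-1)^{|T'|}\prod_{j\notin T'\cup\{k\}}\frac{x_j}{1-x_j}\prod_{j\in T'}\frac{x_j}{x_k-x_j}.
\]
Multiply by $-x_k^{N+1}/(1-x_k)^2$, sum over $k$, and regroup by $T=T'\cup\{k\}$. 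This gives
\[
J^*=\sum_{\varnothing\ne T\subseteq\{1,\ldots,n\}}(-1)^{|T|}\prod_{j\notin T}\frac{x_j}{1-x_j}\prod_{j\in T}x_j\sum_{k\in T}\frac{\psi(x_k)}{\prod_{j\in T\setminus\{k\}}(x_k-x_j)},\qquad \psi(x)=\frac{x^N}{(1-x)^2}.
\]
The inner sum is $\psi[x_T]$, so this is your generating function $G$ of the right-hand side, term by term.

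If you prefer your partial-fraction formulation, the same check runs in the other direction. Write $G=\prod_j\frac{x_j}{1-x_j}\sum_{T}(-1)^{|T|}\prod_{j\in T}(1-x_j)\,\psi[x_T]$ and swap the sums over $T$ and $k\in T$. The coefficient of $(1-x_k)\psi(x_k)$ is then
\[
\sum_{T\ni k}(-1)^{|T|}\prod_{j\in T\setminus\{k\}}\frac{1-x_j}{x_k-x_j}=-\prod_{j\ne k}\frac{x_k-1}{x_k-x_j},
\]
which returns $\phi[x_1,\ldots,x_n]$ immediately.

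Both $\psi[x_T]$ and $\phi[x_1,\ldots,x_n]$ are holomorphic on the polydisc $|x_j|<1$. So equality as functions gives equality of Taylor coefficients, and the proof is complete.

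\textbf{Comparison with the paper.} The paper expands the one-swap terms as a Laurent series in the ordered region $|B_n|<\cdots<|B_1|$. It encodes the two choices in each bracket by binary vectors, and discards every term carrying a nonpositive power of some $B_p$. This is legitimate because the true average has only positive exponents. Each subset $T$ is then recovered exactly once, through its smallest index $i=\min T$.

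Your version stays with holomorphic functions of $x_j=AB_j$ throughout. This buys three things: the balancing condition $\alpha=\sum_j\beta_j$ is immediate, no ordering of the $|B_j|$ is needed, and no cancellation argument for negative exponents is needed. Each $T$ enters through the full divided difference $\psi[x_T]$ rather than only its minimal element. The paper's term-by-term power-series bookkeeping, on the other hand, is the same machinery it reuses for the two-swap and general one-swap computations later on.
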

\begin{proof}
The proof proceeds by expanding  $J^*(\mathcal A;\mathcal B)$ from the logarithmic derivatives  theorem, Theorem \ref{theo:J},   in the case $|\mathcal A|=1$ and $|\mathcal B|=n$.  So $\mathcal A=\{a\}$ and $\mathcal B=\{b_1,\ldots,b_n\}$. We will denote this with the shorthand $J^*(\mathcal A_1; \mathcal B_n)$, expand it as a power series in $A=e^{-a}$ and $B_j=e^{-b_j}$, and extract the coefficient of the monomial $A^\alpha\prod_{j=1}^n B_j^{\beta_j}$.  We will require that $|A|,|B_1|,|B_2|,\ldots,|B_n|<1$ and that $|B_n|<|B_{n-1}|<\cdots<|B_1|$.

The outer sum in \eqref{eq:J*}  for this case now only contains the zero-swap and one-swap terms.
Firstly when $|S|=|T|=0$ we have $S=T=\varnothing$. Here we are only left with the inner sum, as all of the $\Zprod$s are omitted, so the contribution is
\begin{equation}\label{eq: S=phi}
\sum_{\mathcal A+\mathcal B \atop{=U_1+\cdots +U_R\atop |U_r|\leq 2}}\prod_{r=1}^{R}H_{\varnothing,\varnothing}(U_r).
\end{equation}
In this case, if $n\geq 2$, we get a contribution of zero because every allowed partition of $\mathcal A+\mathcal B$ will contain a singleton (a single element from $\mathcal A$ or $\mathcal B$). We can see from \eqref{eqn:Hrmt} that for any singleton, $H_{\varnothing,\varnothing}=0$ since we are summing over $\varnothing$. Therefore, there is no zero-swap term when $n\geq 2$. 

The one-swap term is when $|S|=|T|=1$. Here we have $S=\{a\}$ so the outer sum in \eqref{eq:J*} runs over the cases $T=\{b_1\}, T=\{b_2\}, \ldots,T=\{b_n\}$. In the inner sum there is only one partition that leads to a nonzero contribution: the partition of single elements. This leads to 
\begin{eqnarray}
J^*(\mathcal A_1;\mathcal B_n)&=&\sum_{i=1}^{n} e^{-N(a+b_i)} \frac{Z(a,b_i)Z(-a,-b_i)}{Z^{\dagger}(a,-a)Z^{\dagger}(b_i,-b_i)}\prod_{j\in \{1,\ldots,n\}\setminus\{i\} }
H_{\{a\},\{b_i\}}(\{b_j\})\nonumber\\
&=& \sum_{i=1}^{n} e^{-N(a+b_i)}z(a+b_i)z(-a-b_i)\prod_{j\in \{1,\ldots,n\}\setminus\{i\} }\biggl(\frac{z'}{z}(b_j-b_i)-\frac{z'}{z}(a+b_j)\biggr ),
\end{eqnarray}
where the product is over the numbers 1 to $n$, excluding $i$. 

We can use \eqref{eq:zeds} and \eqref{eq:zeds_gt_1} and change into the variables $A=e^{-a}$, $B_j=e^{-b_j}$ to get
\begin{align}
J^*(\mathcal A_1;\mathcal B_n)&=\sum_{i=1}^{n}A^NB_i^N\frac{-AB_i}{(1-AB_i)^2}\prod_{j\in \{1,\ldots,n\}\setminus\{i\} }\biggl(\frac{-B_j/B_i}{1-B_j/B_i}-\frac{-AB_j}{1-AB_j}\biggr )\nonumber\\&=\sum_{i=1}^{n}A^NB_i^N\frac{-AB_i}{(1-AB_i)^2}\prod_{1\leq p<i}\biggl(\frac{1}{1-B_i/B_p}+\frac{AB_p}{1-AB_p}\biggr)\prod_{i<q\leq n }\biggl(\frac{-B_q/B_i}{1-B_q/B_i}+\frac{AB_q}{1-AB_q}\biggr).
\end{align}
Now, since we have $AB_i$ and $B_i/B_j$ with $i>j$ having modulus less than 1, we can expand each term as a power series in analogy to \eqref{eq:zeds} and \eqref{eq:zeds_gt_1}  to obtain the following,
\begin{align}
J^*(\mathcal A_1;\mathcal B_n)=&\sum_{i=1}^{n}\sum_{k=0}^\infty (-k)A^{N+k}B_i^{N+k}\prod_{1\leq p<i}\biggl(\sum_{l_p=0}^\infty\bigl( B_i^{l_p}B_p^{-l_p} +A^{l_p+1}B_p^{l_p+1}\bigr)\biggr)\nonumber\\
&\times\prod_{i<q\leq n }\biggl(\sum_{m_q=0}^\infty \bigl(-B_q^{m_q+1}B_i^{-m_q-1}+A^{m_q+1}B_q^{m_q+1}\bigr)\biggr ).
\end{align}
Each factor inside the products contains two terms. When we expand the products into one sum, each term corresponds to one choice from each factor. We encode these choices using binary vectors $\sigma = (\sigma_p)_{p<i}\in\{0,1\}^{(i-1)}$  and $\tau = (\tau_q)_{q>i}\in\{0,1\}^{(n-i)}$, where
\begin{align}
\begin{aligned}\label{eq: binary choice vectors}
\sigma_p &=
\begin{cases}
0 &\text{if we choose } B_i^{l_p}B_p^{-l_p},\\
1 &\text{if we choose } A^{l_p+1}B_p^{l_p+1},
\end{cases}
\\[4pt]
\tau_q &=
\begin{cases}
0 &\text{if we choose } -B_q^{m_q+1}B_i^{-m_q-1},\\
1 &\text{if we choose } A^{m_q+1}B_q^{m_q+1}.
\end{cases}
\end{aligned}
\end{align}
Each complete selection of terms corresponds uniquely to a pair \((\sigma,\tau)\),
and summing over all such pairs enumerates every term in the full product expansion.
Hence, the products over \(1 \le p < i\) and \(i < q \le n\) can be rewritten as a sum over all
vectors \((\sigma,\tau)\) to obtain
\begin{align}\label{eq: J-expanded-binary}
&J^*(\mathcal A_1;\mathcal B_n)=\sum_{i=1}^{n}\sum_{k=0}^\infty\sum_{l_1,\ldots,l_{i-1}=0}^\infty\sum_{m_{i+1},\ldots,m_{n}=0}^\infty \sum_{\sigma\in\{0,1\}^{i-1}}\sum_{\tau\in\{0,1\}^{n-i}} (-k)(-1)^{(n-i)-\sum_{q>i}\tau_q}A^{E_A}\prod_{r=1}^n B_r^{E_{B_r}}.
\end{align}
Here for the $A$ exponent we write
\begin{align}
& E_A=N+k+\sum_{p=1}^{i-1}\sigma_p(l_p+1)+\sum_{q=i+1}^n\tau_q(m_q+1),
\end{align}
and for the $B$ exponents we have:
\begin{align}
&\nonumber\text{for $r=i:$}\\
&\qquad E_{B_i}=N+k+\sum_{p=1}^{i-1}(1-\sigma_p)l_p-\sum_{q=i+1}^n(1-\tau_q)(m_q+1),\label{eq: E_B_i}\\
&\nonumber\text{for $r =p<i:$}\\
&\qquad E_{B_p} =
\begin{cases}\label{eq: E_B_p}
      -l_p & \text{if } \sigma_p = 0 \\
      l_p+1 & \sigma_p = 1,
\end{cases}\\
&\nonumber\text{for $r =q>i:$}\\
&\qquad E_{B_q} = m_q+1, \qquad (\text{independent of } \tau_q).
\end{align}
We aim to extract the coefficient of a monomial
\begin{align}
    A^{\alpha}\prod_{r=1}^n B_r^{\beta_r}
    \quad\text{with}\quad \alpha\geq1,\ \beta_r\geq1.
\end{align}
From \eqref{eq:traces-as-coeff}, we see that there must be an overall coefficient of zero on any term in the expansion in which some $B_r$ appears with exponent $\leq0$. From the exponent formula for $p<i$, \eqref{eq: E_B_p}, we see that choosing $\sigma_p=0$ forces $E_{B_p}=-l_p\leq0$, hence such terms cannot contribute to the coefficient of $B_p^{\beta_p}$ with $\beta_p\geq1$. Therefore, for every $p<i$ we must have $\sigma_p=1$. With this pruning, the remaining $B$-exponents simplify to
\begin{align}
    E_{B_p}=l_p+1\quad(p<i),\qquad E_{B_q}=m_q+1 \quad (q>i).
\end{align}
Therefore, matching $E_{B_r}=\beta_r$ forces the summation indices to be
\begin{align}
    l_p=\beta_p-1 \quad (p<i), \qquad m_q=\beta_q-1 \quad (q>i),
\end{align}
which are automatically non-negative because $\beta_r\geq1$. Consequently, once $i$ and the target exponents $\beta_1,\ldots,\beta_n$ are fixed, all $l_p$ and $m_q$ are determined, and the only remaining free parameters in \eqref{eq: J-expanded-binary} are $k$ and $\tau=(\tau_q)_{q>i}\in\{0,1\}^{n-i}$. Substituting these values of $l_p$ and $m_q$ into $E_A$ and $E_{B_i}$ gives:
\begin{align}\label{eq: E_A}
    &E_A= \alpha =N+k+\sum_{p=1}^{i-1}\beta_p+\sum_{q=i+1}^n\tau_q\beta_q
\end{align}
and
\begin{align}\label{eq:E_Bequation}
    &E_{B_i}= \beta_i =N+k-\sum_{q=i+1}^n(1-\tau_q)\beta_q.
\end{align}

We now also can see from \eqref{eq: E_A} and \eqref{eq:E_Bequation} that $\alpha = \sum_{r=1}^{n}\beta_r$, as expected. 

We now obtain an expression for $k$ in terms of the $\beta$s. Rearranging our equation for $\beta_i$ we get
\begin{align}\label{eq:kexpression}
    &k= \beta_i -N+\sum_{q>i}(1-\tau_q)\beta_q.
\end{align}
So for a fixed $i$ and choice of $\beta$s, each $\tau\in\{0,1\}^{n-i}$ gives exactly one candidate for $k$. However, this candidate must also satisfy $k\geq1$.

Therefore the contribution from the $i$th term to the coefficient of $A^{\alpha}\prod_{r=1}^n B_r^{\beta_r}$ can be read off from \eqref{eq: J-expanded-binary}, and using the positivity notation defined in the statement of the theorem and the expression for $k$ in \eqref{eq:kexpression},  we then have the coefficient of the monomial $A^{\alpha}\prod_{r=1}^n B_r^{\beta_r}$ is
\begin{align}\label{eq: C_i inequality}
   \sum_{i=1}^n\sum_{\tau\in\{0,1\}^{n-i}}(-1)^{1+(n-i)-\sum_{q>i}\tau_q}\biggl(\beta_i -N+\sum_{q>i}(1-\tau_q)\beta_q\biggr)_+.
\end{align}
Now instead of summing over $\tau$ we can instead think of the subsets of $\{i+1,\ldots,n\}$ where we select all $j$ ($i+1\leq j\leq n$) for which $\tau_j$ is zero.  If we include the sum over $i$, we find that we in fact have a sum over all possible non-empty subsets of $\{1,\ldots,n\}$, as we can think of the role of $i$ in the double sum as indicating the smallest element in each set.  The cardinality of such a set is exactly $1+(n-i)-\sum_{q>i}\tau_q$.

Denoting such a subset by $\Gamma$, and comparing coefficients in  \eqref{eq:traces-as-coeff}, we then get
\begin{align}
    \int_{U(N)}    \Tr\bigl((U^\dagger)^{\alpha}\bigr)    \prod_{j=1}^{n}\Tr\bigl(U^{\beta_j}\bigr)
    \,dU=
\begin{cases} 
      \sum_{\varnothing\neq \Gamma\subseteq\{1,...,n\}}(-1)^{|\Gamma|}\left(\sum_{r\in \Gamma}\beta_r-N \right)_+, & \text{if } \alpha = \sum_{r=1}^{n}\beta_r \\
      0, & otherwise.
\end{cases}
\end{align}

\end{proof}

\section{Negative values of averages of traces with one $U^\dagger$ and three $U$'s}

While all averages involving just 2 or 3 traces are non-negative, numerically it can be seen that averages with one power of the trace of $U^\dagger$ and three powers of the trace of $U$ (called the 1--3 trace average for short, or equivalently the 3--1 average), can achieve negative values.  We will use Theorem \ref{thrm: 1--n trace moment} to describe the region in which these negative values arise.  

Unitary matrices weighted by Haar measure can be generated numerically by following \cite{kn:mezzadri07}.  We show an example in Figure \ref{fig:heatmap}, where values of 
\begin{align}\label{eq:numerical}
  \int_{U(N)}   \Tr\bigl((U^\dagger)^{\beta_1+\beta_2+\beta_3}\bigr)\prod_{j=1}^3 \Tr(U^{\beta_j})dU
\end{align}
are plotted.  Here the exponent on the trace of $U^{\dagger}$, which we sometimes call $\alpha$, is forced to be $\beta_1+\beta_2+\beta_3$ by \eqref{eq:balancing} if we want the average to be something other than zero. 

 A clear region of negativity (depicted in blue)  is visible when all three parameters $\beta_1,\beta_2,\beta_3$ are small relative to $N$. This negative region is not isolated to individual lattice points, but instead forms a connected subset of the $(\beta_1,\beta_2)$-plane whose shape evolves in a systematic way as $\beta_3$ varies. We see that the region of negativity disappears once $\beta_3$ reaches the $N$ barrier, which is consistent with it being bounded by $N$ in both $\beta_1$ and $\beta_2$ when it does occur. It is also bounded by lines in the $(\beta_1,\beta_2)$-plane of the form $\beta_1+\beta_2=k$ for some $k$, where in the $\beta_3=2$ and $\beta_3=3$ cases we have two of these lines forming an upper and lower boundary.

\begin{figure}[]
    \centering
    \includegraphics[width=\linewidth]{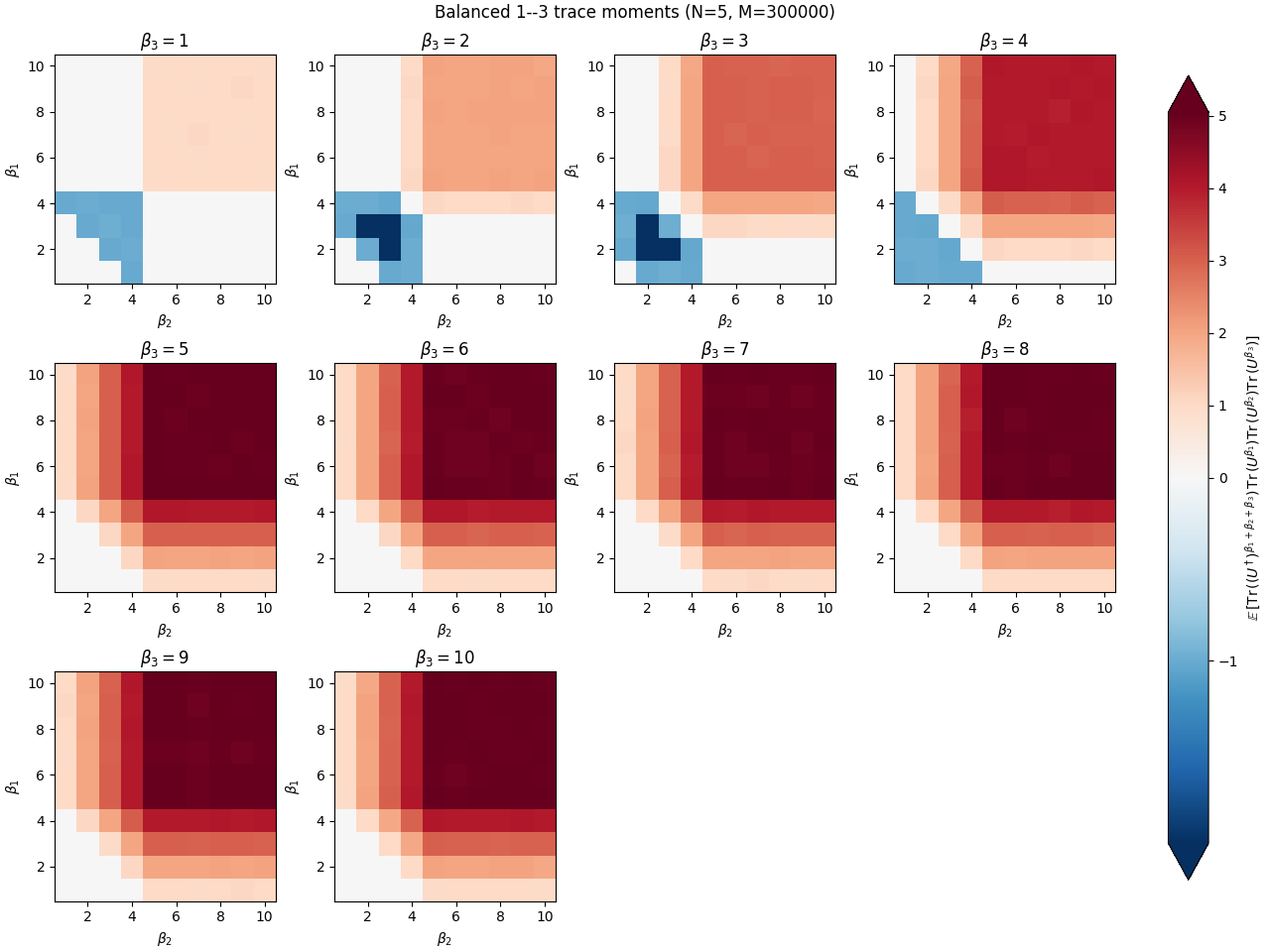}
    \caption{Numerical values of trace averages \eqref{eq:numerical} for matrices from $U(5)$. Blue indicates parameter values where the average of traces takes a negative value. }
    \label{fig:heatmap}
\end{figure}

The subset-sum formula, see Theorem \ref{thrm: 1--n trace moment}, expresses the 1--3 trace average as a linear combination of positive-part terms:
\begin{align}
   \int_{U(N)}\Tr\big((U^\dagger)^\alpha)\prod_{j=1}^3\Tr(U^{\beta_j})dU =
   \begin{cases}
       C(\beta_1,\beta_2,\beta_3;N), & \alpha=\beta_1+\beta_2+\beta_3\\
       0\ & \alpha\neq \beta_1+\beta_2+\beta_3,
   \end{cases}
   \label{eq:1-3-trace-moment}
\end{align}
where
\begin{align}
    C(\beta_1,\beta_2,\beta_3;N)
    &= -(\beta_1+\beta_2+\beta_3-N)_++(\beta_1+\beta_2-N)_++(\beta_1+\beta_3-N)_+ + (\beta_2+\beta_3-N)_+\nonumber\\
    &\qquad - (\beta_1-N)_+-(\beta_2-N)_+ -(\beta_3-N)_+.
    \label{eq:1-3-coefficient}
\end{align}
 Each term of the form $(\cdot)_+$ switches on or off across a plane in $(\beta_1,\beta_2,\beta_3)$--space. Consequently, the trace moment is a piecewise linear function whose sign is determined by the relative position of $(\beta_1,\beta_2,\beta_3)$ with respect to these hyperplanes. Moreover, the condition for a non-zero average,  $\alpha=\beta_1+\beta_2+\beta_3$, removes $\alpha$ as an independent variable. The problem therefore, reduces to understanding a function of three integer variables, making a geometric analysis natural. For notational convenience, we write
\begin{align}
    (x,y,z):=(\beta_1,\beta_2,\beta_3),
\end{align}
and define
\begin{align}
    F(x,y,z):=
    \sum_{\varnothing\neq S\subset\{1,2,3\}}(-1)^{|S|}\left( \sum_{j\in S}\beta_j-N \right)_+.
\end{align}
We now determine precisely where $F(x,y,z)$ can be negative, keeping in mind that in the trace averages $x,y,z>0$.

\paragraph{Step 1: localisation to the cube via singleton terms.}
If $x\geq N$, then every positive part term involving $x$ is active, and the inclusion-exclusion cancelations reduce $F(x,y,z)$ to the 1--2 expansion in the remaining variables
\begin{align}
    F(x,y,z)=(y+z-N)_+-(y-N)_+-(z-N)_+\;\geq\;0.
\end{align}
The same holds if $y\geq N$ or $z\geq N$. Hence any negativity must occur within the cube
\begin{align}\label{eq:neg cube}
    1\leq x,y,z \leq N-1.
\end{align}
In this region, the singleton terms vanish, and so $F$ simplifies to
\begin{align}\label{eq:Fxyz-cube}
    F(x,y,z)=-(x+y+z-N)_++(x+y-N)_++(x+z-N)_++(y+z-N)_+.
\end{align}

\paragraph{Step 2: the lower plane $x+y+z=N$.}
Inside the cube \eqref{eq:neg cube}, if $x+y+z\leq N$ then $(x+y+z-N)_+=0$. Moreover, all pair terms vanish, and so
\begin{align}
    x+y+z\leq N \quad\implies \quad F(x,y,z)=0.
\end{align}
Therefore, negativity (or any nonzero value) requires crossing the lower diagonal plane
\begin{align}
x+y+z\geq N+1.
\end{align}
In this region, we have $(x+y+z)_+=x+y+z-N$, so \eqref{eq:Fxyz-cube} becomes
\begin{align}\label{eq:Fxyz-pairs}
    F(x,y,z)=-(x+y+z-N)+(x+y-N)_++(x+z-N)_++(y+z-N)_+.
\end{align}

\paragraph{Step 3: the upper plane $x+y+z=2N$.}
Still within the cube \eqref{eq:neg cube}, suppose that $x+y+z\geq2N$. Then for each pair,
\begin{align}
    x+y=(x+y+z)-z\geq2N-(N-1)=N+1,
\end{align}
and similarly for $x+z$ and $y+z$. Hence, all three pair terms in \eqref{eq:Fxyz-pairs} are active, and we obtain
\begin{align}
    F(x,y,z)
    &= -(x+y+z)+(x+y-N)+(x+z-N)+(y+z-N)\\
    &= x+y+z-2N\;\geq\;0.
\end{align}
Therefore, negativity cannot occur above the upper diagonal plane
\begin{align}
    x+y+z\geq 2N \quad \implies \quad F(x,y,z)\geq0.
\end{align}

\paragraph{Step 4: negativity in the slab between the planes.}
Assume now that we are inside the cube \eqref{eq:neg cube} and in the slab
\begin{align}\label{eq:slab}
    N+1\leq x+y+z \leq 2N-1.
\end{align}
We claim that $F(x,y,z)<0$ throughout this region. We consider how many of the three pair sums $x+y$, $x+z$, $y+z$ exceed $N$.
\paragraph{(i) All three pair sums exceed $N$.}
In this case, all of the pair sums are activated, and so
\begin{align}
    F(x,y,z)=-(x+y+z-N)+(x+y-N)+(x+z-N)+(y+z-N)=x+y+z-2N.
\end{align}
and so $F<0$ below the upper plane $x+y+z=2N$.
\paragraph{(ii) Exactly two pair sums exceed $N$.}
The two active pair sums must share a common variable, say $x$, without loss of generality. Then $(x+y-N)_+=x+y-N$ and $(x+z-N)_+=x+z-N$ while $(y+z-N)_+=0$, hence
\begin{align}
    F=-(x+y+z-N)+(x+y-N)+(x+z-N)=x-N\leq-1,
\end{align}
since $x\leq N-1$ in the cube.
\paragraph{(iii) Exactly one pair sum exceeds $N$.}
Say $(x+y-N)_+=x+y-N$ while the other two pair terms vanish. Then
\begin{align}
    F=-(x+y+z-N)+(x+y-N)=-z\leq-1.
\end{align}
In each case $F(x,y,z)<0$ and therefore the negative region is precisely:
\begin{align}
     \int_{U(N)}\Tr\big((U^\dagger)^\alpha)\prod_{j=1}^3\Tr(U^{\beta_j})dU<0\iff 1\leq \beta_1,\beta_2,\beta_3 \leq N-1
    \ \text{ and }\
    N+1\leq \beta_1+\beta_2+\beta_3\leq 2N-1.
\end{align}
This matches the blue region in Figure \ref{fig:heatmap}.


\section{Averages of traces with two $U$'s and two $U^\dagger$s  }

In Theorem \ref{theo:J}, we write $e^{-a_1}=A_1$, $e^{-a_2}=A_2$, $e^{-b_1}=B_1$, and $e^{-b_2}=B_2$ and we will require that $|A_1|,|A_2|,|B_1|,|B_2|<1$, as well as $|A_2|<|A_1|$ and $|B_2|<|B_1|$.  Then the theorem gives us:
\begin{eqnarray}\label{eq:sums2:2}
&&\int_{U(N)} A_1\frac{\Lambda_U'}{\Lambda_U}(A_1) A_2\frac{\Lambda_{U}'}{\Lambda_{U}}(A_2) B_1\frac{\Lambda_{U^\dagger}'}{\Lambda_{U^\dagger}}(B_1)B_2\frac{\Lambda_{U^\dagger}'}{\Lambda_{U^\dagger}}(B_2)dU\nonumber \\
&&=\frac{A_1B_1}{(1-A_1B_1)^2}\frac{A_2B_2}{(1-A_2B_2)^2}+\frac{A_1B_2}{(1-A_1B_2)^2}\frac{A_2B_1}{(1-A_2B_1)^2}\nonumber \\
&&+A_1^NB_1^N\frac{-A_1B_1}{(1-A_1B_1)^2}\left[\left(\frac{-B_2/B_1}{1-B_2/B_1}+\frac{A_1B_2}{1-A_1B_2}\right)\left(\frac{-A_2/A_1}{1-A_2/A_1}+\frac{A_2B_1}{1-A_2B_1}\right)+\frac{A_2B_2}{(1-A_2B_2)^2}\right]\nonumber \\
&&+A_1^NB_2^N\frac{-A_1B_2}{(1-A_1B_2)^2}\left[\left(\frac{1}{1-B_2/B_1}+\frac{A_1B_1}{1-A_1B_1}\right)\left(\frac{-A_2/A_1}{1-A_2/A_1}+\frac{A_2B_2}{1-A_2B_2}\right)+\frac{A_2B_1}{(1-A_2B_1)^2}\right]\nonumber \\
&&+A_2^NB_1^N\frac{-A_2B_1}{(1-A_2B_1)^2}\left[\left(\frac{-B_2/B_1}{1-B_2/B_1}+\frac{A_2B_2}{1-A_2B_2}\right)\left(\frac{1}{1-A_2/A_1}+\frac{A_1B_1}{1-A_1B_1}\right)+\frac{A_1B_2}{(1-A_1B_2)^2}\right]\nonumber \\
&&+A_2^NB_2^N\frac{-A_2B_2}{(1-A_2B_2)^2}\left[\left(\frac{1}{1-B_2/B_1}+\frac{A_2B_1}{1-A_2B_1}\right)\left(\frac{1}{1-A_2/A_1}+\frac{A_1B_2}{1-A_1B_2}\right)+\frac{A_1B_1}{(1-A_1B_1)^2}\right]\nonumber \\
&&+A_1^NA_2^NB_1^NB_2^N\frac{-A_1B_1}{(1-A_1B_1)^2}\frac{-A_1B_2}{(1-A_1B_2)^2}\frac{-A_2B_1}{(1-A_2B_1)^2}\frac{-A_2B_2}{(1-A_2B_2)^2}\frac{(1-A_2/A_1)^2}{-A_2/A_1}\frac{(1-B_2/B_1)^2}{-B_2/B_1}.
\end{eqnarray}

The first line after the equal sign is the zero-swap term, the four similar lines in the middle are one-swap terms, and the final line is a two-swap term.  We will see that the two-swap term becomes combinatorially much more complicated and this limits using Theorem \ref{theo:J} for progressively more complicated trace averages. 

The zero- and one-swap terms in this case were explored in \cite{kn:kuplal}, but here we adopt the notation that unifies them with all the other results in this paper.  Identifying the  two-swap terms is new and they are crucial for producing Theorem \ref{theo:2-2full} that holds for unrestricted $N$.

The contributions from the different swap terms are evaluated in the sections below, but here we will state the theorem:
\begin{theorem}\label{theo:2-2full}
Given integers $\alpha_1,\alpha_2,\beta_1,\beta_2$, let $\alpha_1+\alpha_2=\beta_1+\beta_2$. Then the following formula holds for any $N$:
\begin{eqnarray}\label{eq:2-2finalanswer}
&&\int_{U(N)}\Tr\big((U^\dagger)^{\alpha_1})\Tr\big((U^\dagger)^{\alpha_2})\Tr(U^{\beta_1})\Tr(U^{\beta_2})dU\nonumber \\
&&=\beta_1\beta_2(\delta(\alpha_1=\beta_1,\alpha_2=\beta_2)+\delta(\alpha_1=\beta_2,\alpha_2=\beta_1))\nonumber \\
&&-(\beta_1+\beta_2-N)_+ +(\beta_2-N)_++(\beta_1-N)_+\nonumber \\
&&\qquad\qquad +(\beta_1+\beta_2-\alpha_2-N)_+-(\beta_2-\alpha_2-N)_+-(\beta_1-\alpha_2-N)_+\nonumber \\
&&\qquad\qquad +(\beta_1+\beta_2-\alpha_1-N)_+-(\beta_2-\alpha_1-N)_+-(\beta_1-\alpha_1-N)_+\nonumber \\
&&\qquad\qquad -\beta_2(\beta_1-N)_+(\delta(\alpha_1=\beta_1,\alpha_2=\beta_2) +\delta(\alpha_1=\beta_2,\alpha_2=\beta_1))\nonumber \\
&&\qquad\qquad-\beta_1(\beta_2-N)_+(\delta(\alpha_1=\beta_1,\alpha_2=\beta_2)+\delta(\alpha_1=\beta_2,\alpha_2=\beta_1))\nonumber \\
&&  +  \mathbf{1}_{\beta_2> N+1,\alpha_2>N+1}(\min\{\alpha_1,\beta_1\}-N)_+
\sum_{r=-1}^1(-2)^{1-|r|}(|\alpha_1-\beta_1|-r)_+(\beta_2-(\alpha_1-\beta_1)_+-N-1+r)_+ \nonumber
\\  &&\qquad\qquad+  \begin{cases}
        0 & \alpha_1<\beta_2\ \text{or}\ \alpha_2\leq N\ \text{or}\ \beta_2\leq N
    \\  (\max\{\alpha_2,\beta_2\}-N-2)_+ & \alpha_1=\beta_2\ \text{and one or both of}\ \alpha_2 \ \text{or}\ \beta_2 \ \text{is}\ N+1
    \\  (\alpha_2-N)(\beta_2-N) & \alpha_1=\beta_2\ \text{and}\ \alpha_2>N+1\ \text{and}\ \beta_2>N+1
    \\  \alpha_1-\beta_2-1 & \alpha_1>\beta_2\ \text{and}\ \alpha_2=\beta_2=N+1
    \\  -2 & \alpha_1>\beta_2\ \text{and}\ \min\{\alpha_2,\beta_2\}=N+1\
    \\  &  \text{and}\ \max\{\alpha_2,\beta_2\}>N+1
    \\  2(\alpha_1-\beta_2) & \alpha_1>\beta_2\ \text{and}\ \alpha_2>N+1\ \text{and}\ \beta_2>N+1.
    \end{cases}
\end{eqnarray}
Here the delta functions and the indicator function are 1 if their conditions are true and zero otherwise.  The positivity notation $(x)_+$ is zero if $x\leq 0$ and equal to $x$ otherwise.  
\end{theorem}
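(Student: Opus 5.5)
We start from \eqref{eq:sums2:2}, which is Theorem \ref{theo:J} for $|\mathcal A|=|\mathcal B|=2$ with the ordering $|A_2|<|A_1|<1$, $|B_2|<|B_1|<1$. We expand every term as a power series in $A_1,A_2,B_1,B_2$ (negative powers of $A_1,B_1$ may appear). By \eqref{eq:traces-as-coeff}, the trace average is the coefficient of $A_1^{\alpha_1}A_2^{\alpha_2}B_1^{\beta_1}B_2^{\beta_2}$, where all four exponents are at least $1$. We treat the zero-swap, one-swap and two-swap lines separately and then add them. By the symmetry $\alpha_1\leftrightarrow\alpha_2$, $\beta_1\leftrightarrow\beta_2$ of the integral, we may also fix a convenient ordering, say $\alpha_1\ge\beta_2$ or $\alpha_1\ge\alpha_2$, where the case analysis needs it. The final formula is stated asymmetrically, so it presumably assumes such a normalisation implicitly; we will check this against the piecewise cases at the end.

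\textbf{Zero-swap line.} Using \eqref{eq:zeds}, it expands to $\sum_{m,n} mn\,(A_1B_1)^m(A_2B_2)^n$ plus the term with $B_1$ and $B_2$ swapped. Its coefficient is therefore $\beta_1\beta_2$ times the two delta functions, which is the first line of \eqref{eq:2-2finalanswer}.

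\textbf{One-swap lines.} Each of the four lines has the shape treated in the proof of Theorem \ref{thrm: 1--n trace moment}: a prefactor $-\sum_k k(A_iB_j)^{N+k}$ times a product of two geometric series in ratios such as $B_2/B_1$ or $A_2/A_1$, plus a term $(A_{i'}B_{j'})/(1-A_{i'}B_{j'})^2$.
\begin{itemize}
\item As in that proof, we encode the binary choices in each bracket and discard every choice that leaves a nonpositive exponent on some variable.
\item Matching exponents then fixes all summation indices except $k$. Summing the weights $k$ over the admissible terms gives sums of positive parts.
\item The added terms $(A_{i'}B_{j'})/(1-A_{i'}B_{j'})^2$ force a pairing $\alpha_{i'}=\beta_{j'}$. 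They produce the delta-function lines $-\beta_2(\beta_1-N)_+(\cdots)$ and $-\beta_1(\beta_2-N)_+(\cdots)$.
\item The bracket products produce the three rows of inclusion--exclusion positive parts in $\beta_1,\beta_2$ shifted by $0$, $\alpha_2$ or $\alpha_1$. These are the analogue of Theorem \ref{thrm: 1--n trace moment} with one extra $A$ variable.
\end{itemize}
This part should be mechanical, and it should reproduce the Kuperberg--Lal\'in range when the two-swap term vanishes.

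\textbf{Two-swap line.} This is where the real work lies. Its expansion only has support on monomials whose total $A$-degree is at least $2N$. We rewrite the line as
\[
(A_1A_2B_1B_2)^{N}\prod_{i,j}\frac{A_iB_j}{(1-A_iB_j)^2}\cdot\frac{(1-A_2/A_1)^2(1-B_2/B_1)^2}{(A_2/A_1)(B_2/B_1)}.
\]
The last factor is a Laurent polynomial: it is $(A_1/A_2-2+A_2/A_1)(B_1/B_2-2+B_2/B_1)$, that is, nine monomials with coefficients from $(1,-2,1)^{\otimes2}$. This is the source of the weights $(-2)^{1-|r|}$ in the statement.

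The coefficient we need is thus a signed combination of nine shifted coefficients of $\prod_{i,j}A_iB_j/(1-A_iB_j)^2$. Each of these is a convolution count:
\[
\sum m_{11}m_{12}m_{21}m_{22}\quad\text{over}\quad m_{11}+m_{12}=\alpha_1',\ m_{21}+m_{22}=\alpha_2',\ m_{11}+m_{21}=\beta_1',\ m_{12}+m_{22}=\beta_2',
\]
with shifted exponents $\alpha_i'=\alpha_i-N\pm\cdot$ and $\beta_j'$ similar. This is a one-parameter sum, $m_{11}=t$, of a quartic polynomial over an interval determined by the minima of the margins. So each term is a piecewise polynomial.

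The main obstacle is showing that the second-difference combination over the nine shifts collapses. We expect it to collapse to the indicator term $\mathbf{1}_{\beta_2>N+1,\alpha_2>N+1}(\min\{\alpha_1,\beta_1\}-N)_+\sum_r(\cdots)$ plus the boundary cases listed in the final brace. Those cases arise when some shifted margin equals $0$ or $1$ (i.e.\ $\alpha_2$ or $\beta_2=N+1$), or when interval endpoints coincide ($\alpha_1=\beta_2$).

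We would first apply the discrete second difference in the $A$ direction to the closed-form inner sum, reducing the quartic to a lower-degree expression, and then take the difference in $B$. Afterwards we would split into the cases $\alpha_1<\beta_2$, $\alpha_1=\beta_2$ and $\alpha_1>\beta_2$, and treat separately the degenerate cases where some margin is $N+1$. Finally we would verify the resulting piecewise formula against direct small-$N$ computations, and against the Diaconis--Shahshahani value \eqref{eq:DS} in its range, as a consistency check.
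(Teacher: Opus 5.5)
Your set-up is the same as the paper's. You use the same zero/one/two-swap split of \eqref{eq:sums2:2}, the same factorisation of the two-swap term into $(A_1A_2B_1B_2)^N\prod_{i,j}A_iB_j/(1-A_iB_j)^2$ times the Laurent polynomial with weights $g_{p,q}=(-2)^{2-|p|-|q|}$, and the same one-parameter sum. Your $t=m_{11}$ is the paper's $n$, and the summand is $t\,(\alpha_1-N-t-p)_+(\beta_1-N-t-q)_+(t-\alpha_1+\beta_2+p+q)_+$. The one-swap part is indeed routine.

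The gap is that the two-swap coefficient, which carries all of the new content of the theorem, is never computed. You say you ``expect'' the nine-shift combination to collapse to the indicator term plus the six-case brace, and you propose to confirm this against small-$N$ numerics, which is not a proof. The route you sketch is also the difficult order. You would sum the quartic over $t$ in closed form and only then take second differences in the $A$ and $B$ directions. But the closed form is piecewise in the shifted margins, and the shifts straddle exactly the breakpoints where every surviving contribution lives, so you would have to track each piece through all nine shifts.

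The missing idea is to take the differences first, at fixed $t$. For constants $A,B,C$ one has $\sum_{p,q\in\{-1,0,1\}}g_{p,q}(A-p)(B-q)(C+p+q)=0$. The weights $(1,-2,1)$ in $p$ annihilate $1$ and $p$, leaving $-2g_{1,q}(B-q)$, which the weights in $q$ then annihilate. Consequently the $t$-th summand vanishes whenever each positive-part argument is either nonnegative for all nine shifts or nonpositive for all of them. So only $t\in\{\alpha_1-N,\ \beta_1-N,\ \alpha_1-\beta_2-1,\ \alpha_1-\beta_2,\ \alpha_1-\beta_2+1\}$, with $t\geq 1$, can contribute. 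These values account for the statement as follows:
\begin{itemize}
\item The first two values give the $(\min\{\alpha_1,\beta_1\}-N)_+\sum_r(\cdots)$ term; at most one of them is relevant, according to the sign of $\alpha_1-\beta_1$.
\item The last three give the six-case brace, after a finite case split on $\alpha_1-\beta_2$, $\beta_2-N$ and $\alpha_2-N$.
\item The two groups of $t$ can coincide only when $\beta_2=N+1$ or $\alpha_2=N+1$, and the indicator $\mathbf{1}_{\beta_2>N+1,\alpha_2>N+1}$ is exactly what prevents double counting.
\end{itemize}
None of this is in your proposal.

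Finally, drop the idea of normalising the exponents by symmetry. The expansion in the fixed region $|A_2|<|A_1|$, $|B_2|<|B_1|$ yields the coefficient for every exponent tuple, and the formula is asserted for all of them. If you reduced to, say, $\alpha_1\geq\beta_2$, you would still owe a proof that the asymmetric-looking right-hand side is invariant under $\alpha_1\leftrightarrow\alpha_2$ and $\beta_1\leftrightarrow\beta_2$.
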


The following table compares \eqref{eq:2-2finalanswer} to a numerical evaluation, following \cite{kn:mezzadri07}, of the various trace averages with two powers of $\Tr U$ and two $\Tr(U^\dagger)$

\begin{table}[htbp]
\begin{center}
\begin{tabular}{||c|c|c|c|c|c||}
\hline
$\alpha_1$ & $\alpha_2$  & $\beta_1$& $\beta_2$& eq. \eqref{eq:2-2finalanswer}& numerical \\ \hline
7 & 3 & 6& 4 & -2& -1.9935856434522423\\ \hline
6 & 4 & 6 & 4 & 22&21.900323991811124\\ \hline
7& 3 & 3& 7 & 19&18.999789291700733\\ \hline
5 & 16 & 11& 10 & 0& 
0.0715881683791633\\ \hline
10 & 9 & 9 & 10 & 57& 
56.9645409083255\\ \hline
9 & 9 & 9 & 9 & 120& 
119.93166891085507\\ \hline
$\:\:\:$10$\:\:\:$ & $\:\:\:$11 $\:\:\:$& $\:\:\:$11 $\:\:\:$& $\:\:\:$10 $\:\:\:$& 57& 56.912338270268656\\ \hline
12 & 11 & 13 & 10 & -5 &-4.970612272769896\\ \hline
12 & 9 & 12 & 9 & 59 &59.22595730480779\\ \hline
11 & 10 & 12 & 9 & -5 &-5.040295147397936\\ \hline
13 & 10 & 12 & 11 & -5 &-4.982131539971617\\ \hline
\end{tabular}
\caption{Averages of traces calculated by numerically generating matrices from $U(8)$ with Haar measure, compared to \eqref{eq:2-2finalanswer}, for various exponents $\alpha_1,\alpha_2,\beta_1$ and $\beta_2$.}
\label{fig:real}
\end{center}
\end{table}


\subsection{Zero-swap terms}

With the usual method of expanding as geometric series, the zero-swap term becomes
\begin{align}
&\frac{A_1B_1}{(1-A_1B_1)^2}\frac{A_2B_2}{(1-A_2B_2)^2}+\frac{A_1B_2}{(1-A_1B_2)^2}\frac{A_2B_1}{(1-A_2B_1)^2}\nonumber \\
   & =\sum_{n=0}^\infty n(A_1B_1)^n\sum_{m=0}^\infty m(A_2B_2)^m+\sum_{n=0}^\infty n(A_1B_2)^n\sum_{m=0}^\infty m(A_2B_1)^m\nonumber\\
  & = \sum_{n,m=0}^\infty nmA_1^nA_2^mB_1^nB_2^m + \sum_{n,m=0}^\infty nmA_1^nA_2^mB_1^mB_2^n.
\end{align}
In line with \eqref{eq:traces-as-coeff}, we look for the coefficient of $A_1^{\alpha_1}A_2^{\alpha_2}B_1^{\beta_1}B_2^{\beta_2}$, and find that the contribution to the trace average $ \int_{U(N)}\Tr\big((U^\dagger)^{\alpha_1})\Tr\big((U^\dagger)^{\alpha_2})\Tr(U^{\beta_1})\Tr(U^{\beta_2})dU$ from the zero-swap term is
\begin{equation}
\beta_1\beta_2\left[\delta(\alpha_1=\beta_1,\alpha_2=\beta_2)+\delta(\alpha_1=\beta_2,\alpha_2=\beta_1)\right],
\end{equation}
where the delta functions are 1 when all the conditions are satisfied and zero otherwise. 

\subsection{One-swap terms}

The first one-swap term is 
\begin{align}\label{eq:first1swap}
    &A_1^{N} B_1^{N}
\frac{-A_1 B_1}{(1 - A_1 B_1)^2}
\Bigg[
\left(
\frac{-B_2/B_1}{1 - B_2/B_1}
+
\frac{A_1 B_2}{1 - A_1 B_2}
\right)
\left(
\frac{-A_2/A_1}{1 - A_2/A_1}
+
\frac{A_2 B_1}{1 - A_2 B_1}
\right)
+
\frac{A_2 B_2}{(1 - A_2 B_2)^2}
\Bigg]\nonumber\\
&= A_1^NB_1^N\Big(-\sum^\infty_{n=1}n(A_1B_1)^n\Big)\Bigg[\sum^\infty_{m=1}\Big((A_1B_2)^{m}-(B_2/B_1)^{m}\Big)\sum^\infty_{j=1}\Big((A_2B_1)^{j}-(A_2/A_1)^{j}\Big)\nonumber\\
&\qquad\qquad\qquad+\sum_{m=1}^\infty m(A_2B_2)^m\Bigg]\\
&=-\sum_{n,m,j=1}^\infty n\Big(A_1^{N+n+m}A_2^{j}B_1^{N+n+j}B_2^{m}-A_1^{N+n+m-j}A_2^{j}B_1^{N+n}B_2^{m}\nonumber\\
&\qquad\qquad-A_1^{N+n}A_2^{j}B_1^{N+n+j-m}B_2^{m}+A_1^{N+n-j}A_2^{j}B_1^{N+n-m}B_2^{m}\Big)\nonumber\\
&\qquad\qquad\qquad-\sum^\infty_{n,m=1}nmA_1^{N+n}A_2^mB_1^{N+n}B_2^m.\nonumber
\end{align}
Again, we look for the coefficient of $A_1^{\alpha_1}A_2^{\alpha_2}B_1^{\beta_1}B_2^{\beta_2}$.  Once the exponents $\alpha_1, \alpha_2, \beta_1, \beta_2$ are fixed, each sum in the final equality above has at most one term with these powers. In the first sum, for example, the variables of summation must satisfy the following:
\begin{equation}
\alpha_1=N+n+m, \quad \alpha_2 =j,\quad \beta_1=N+n+j,\quad \beta_2=m,
\end{equation}
which has exactly one integer solution:
\begin{equation}
n=\beta_1-\alpha_2-N,\quad j=\alpha_2, \quad m=\beta_2.
\end{equation}
However, this is only an allowable solution if all of $n, j$ and $m$ are positive.  Hence, the contribution of the first sum, in the positivity notation, is
\begin{equation}\label{eq:ambiguity}
-(\beta_1-\alpha_2-N)_+.
\end{equation}
Note that due to the condition for a non-zero average, $\alpha_1+\alpha_2=\beta_1+\beta_2$, there is an ambiguity over how to write \eqref{eq:ambiguity}, and we could equally well have written $-(\alpha_1-\beta_2-N)_+$.  

The other sums in \eqref{eq:first1swap} are dealt with in exactly the same manner.  The total contribution of the first one-swap term in \eqref{eq:sums2:2}  is
\begin{equation}
-(\beta_1-\alpha_2-N)_+ +(\beta_1-N)_++(\beta_1+\beta_2-\alpha_2-N)_+-(\beta_1+\beta_2-N)_+-\delta(\alpha_1=\beta_1,\alpha_2=\beta_2)\beta_2(\beta_1-N)_+.
\end{equation}

The second one-swap term is
\begin{eqnarray}\label{eq:second1swap}
&&A_1^NB_2^N\frac{-A_1B_2}{(1-A_1B_2)^2}\left[\left(\frac{1}{1-B_2/B_1}+\frac{A_1B_1}{1-A_1B_1}\right)\left(\frac{-A_2/A_1}{1-A_2/A_1}+\frac{A_2B_2}{1-A_2B_2}\right)+\frac{A_2B_1}{(1-A_2B_1)^2}\right]\nonumber\\
&&=-\sum_{n,m,j=1}^\infty n\Big(A_1^{N+n}A_2^{j}B_1^{-m+1}B_2^{N+n+m+j-1}-A_1^{N+n-j}A_2^{j}B_1^{-m+1}B_2^{N+n+m-1}\\
&&\quad\qquad+A_1^{N+n+m}A_2^{j}B_1^{m}B_2^{N+n+j}-A_1^{N+n+m-j}A_2^{j}B_1^{m}B_2^{N+n}\Big)\nonumber\\
&&\qquad\qquad\qquad-\sum^\infty_{n,m=1}nmA_1^{N+n}A_2^mB_1^{m}B_2^{N+n}\text{.}\nonumber
\end{eqnarray}
We see that the first two sums in \eqref{eq:second1swap} cannot contribute to the coefficient of $A_1^{\alpha_1}A_2^{\alpha_2}B_1^{\beta_1}B_2^{\beta_2}$ with positive $\beta_1$ because of the exponent $-m+1$ on $B_1$. Similarly to the first one-swap term, the sums on the penultimate and ultimate lines of \eqref{eq:second1swap} contribute 
\begin{equation}
-(\beta_2-\alpha_2-N)_++(\beta_2-N)_+ -\delta(\alpha_1=\beta_2,\alpha_2=\beta_1)\beta_1(\beta_2-N)_+
\end{equation}
to the coefficient of the $A_1^{\alpha_1}A_2^{\alpha_2}B_1^{\beta_1}B_2^{\beta_2}$ term. 

The third one-swap term in \eqref{eq:sums2:2} is
\begin{eqnarray}\label{eq:third1swap}
&&A_2^NB_1^N\frac{-A_2B_1}{(1-A_2B_1)^2}\left[\left(\frac{-B_2/B_1}{1-B_2/B_1}+\frac{A_2B_2}{1-A_2B_2}\right)\left(\frac{1}{1-A_2/A_1}+\frac{A_1B_1}{1-A_1B_1}\right)+\frac{A_1B_2}{(1-A_1B_2)^2}\right]\nonumber \\
&&=-\sum_{n,m,j=1}^\infty n\Big(A_1^{-j+1}A_2^{N+n+m+j-1}B_1^{N+n}B_2^{m}+A_1^{j}A_2^{N+n+m}B_1^{N+n+j}B_2^{m}\\
&&\qquad\qquad-A_1^{-j+1}A_2^{N+n+j-1}B_1^{N+n-m}B_2^{m}-A_1^{j}A_2^{N+n}B_1^{N+n-m+j}B_2^{m}\Big)\nonumber\\
&&\qquad\qquad\qquad-\sum^\infty_{n,m=1}nmA_1^{m}A_2^{N+n}B_1^{N+n}B_2^{m}\text{.}\nonumber
\end{eqnarray}
Two of the sums have a negative exponent on $A_1$ and so must cancel overall.  The remaining terms contribute
\begin{equation}
-(\beta_1-\alpha_1-N)_++(\beta_1+\beta_2-\alpha_1-N)_+-\delta(\alpha_1=\beta_2,\alpha_2=\beta_1)\beta_2(\beta_1-N)_+.
\end{equation}

The fourth one-swap term is
\begin{eqnarray}\label{eq:fourth1swap}
&&A_2^NB_2^N\frac{-A_2B_2}{(1-A_2B_2)^2}\left[\left(\frac{1}{1-B_2/B_1}+\frac{A_2B_1}{1-A_2B_1}\right)\left(\frac{1}{1-A_2/A_1}+\frac{A_1B_2}{1-A_1B_2}\right)+\frac{A_1B_1}{(1-A_1B_1)^2}\right]\nonumber \\
&&=-\sum_{n,m,j=1}^\infty n\Big(A_1^{-j+1}A_2^{N+n+j-1}B_1^{-m+1}B_2^{N+n+m-1}+A_1^{j}A_2^{N+n}B_1^{-m+1}B_2^{N+n+m+j-1}\\
&&\qquad\qquad+A_1^{-j+1}A_2^{N+n+m+j-1}B_1^{m}B_2^{N+n}+A_1^{j}A_2^{N+n+m}B_1^{m}B_2^{N+n+j}\Big)\nonumber\\
&&\quad\quad\quad-\sum^\infty_{n,m=1}nmA_1^{m}A_2^{N+n}B_1^{m}B_2^{N+n}\text{.}\nonumber
\end{eqnarray}
Here only the final two sums have no negative exponents, and these contribute
\begin{equation}
-(\beta_2-\alpha_1-N)_+-\delta(\alpha_1=\beta_1,\alpha_2=\beta_2)\beta_1(\beta_2-N)_+.
\end{equation}
So we find that the contribution to the trace average $ \int_{U(N)}\Tr\big((U^\dagger)^{\alpha_1})\Tr\big((U^\dagger)^{\alpha_2})\Tr(U^{\beta_1})\Tr(U^{\beta_2})dU$ from the one-swap terms is
\begin{eqnarray}\label{eq:2-2oneswap1}
&&-(\beta_1+\beta_2-N)_+ +(\beta_2-N)_++(\beta_1-N)_+\nonumber \\
&&\qquad+(\beta_1+\beta_2-\alpha_2-N)_+-(\beta_2-\alpha_2-N)_+-(\beta_1-\alpha_2-N)_+\nonumber \\
&&\qquad+(\beta_1+\beta_2-\alpha_1-N)_+-(\beta_2-\alpha_1-N)_+-(\beta_1-\alpha_1-N)_+\nonumber \\
&&\qquad-\left[\beta_2(\beta_1-N)_++\beta_1(\beta_2-N)_+\right]\left[\delta(\alpha_1=\beta_1,\alpha_2=\beta_2)+\delta(\alpha_1=\beta_2,\alpha_2=\beta_1)\right],
\end{eqnarray}
or, to write it in a way that might suggest a generalisation, 
\begin{eqnarray}\label{eq:2-2oneswap2}
&&-\left[\beta_2(\beta_1-N)_++\beta_1(\beta_2-N)_+\right]\left[\delta(\alpha_1=\beta_1,\alpha_2=\beta_2)+\delta(\alpha_1=\beta_2,\alpha_2=\beta_1)\right]\nonumber \\
&&\qquad+\sum_{ \substack{S \subset \{1,2\},T\subseteq \{1,2\}\\|S|<2\\0<|T|}}(-1)^{|T|+|S|+1}(\sum_{t\in T}\beta_t- \sum_{s\in S}\alpha_s-N)_{+},
\end{eqnarray}
where  $|T|$ denotes the number of elements in the set $T$.

\subsection{Two-swap terms}


The two-swap term in \eqref{eq:sums2:2} is
\begin{equation}\label{eq:mathcalS}
\mathcal{S}=A_1^NA_2^NB_1^NB_2^N \frac{-A_1B_1}{(1-A_1B_1)^2}\frac{-A_1B_2}{(1-A_1B_2)^2}\frac{-A_2B_1}{(1-A_2B_1)^2}\frac{-A_2B_2}{(1-A_2B_2)^2} \frac{(1-A_2/A_1)^2}{-A_2/A_1} \frac{(1-B_2/B_1)^2}{-B_2/B_1}.
\end{equation}
Expanding as geometric series in the usual way, gives us the rearrangement
\begin{equation}\label{eq:mathcalS2}
    \mathcal{S} = \sum_{n,m,j,k=1}^\infty nmjkA_1^{N+n+m}A_2^{N+j+k}B_1^{N+n+j}B_2^{N+m+k} M,
\end{equation}
where $M$ is the product of the last two factors in \eqref{eq:mathcalS}. We define $I=\{-1,0,1\}$ to get
\begin{align}\label{eq:M}
    M &= (2-A_1^{-1}A_2-A_1A_2^{-1})(2-B_1^{-1}B_2-B_1B_2^{-1})\nonumber
\\  &= \left(\sum_{p\in I}2^{1-|p|}(-1)^pA_1^pA_2^{-p}\right)\left(\sum_{q\in I}2^{1-|q|}(-1)^qB_1^qB_2^{-q}\right)\nonumber
\\  &= \sum_{p,q\in I}{g_{p,q}A_1^pA_2^{-p}B_1^qB_2^{-q}},
\end{align}
where $g_{p,q} = (-2)^{2-|p|-|q|}=2^{2-|p|-|q|}(-1)^{p+q}$.

Now we can plug this in to $\mathcal{S}$ and rearrange to get
\begin{equation}\label{eq:mathcalS3}
\mathcal{S} = \sum_{p,q\in I}\sum_{n,m,j,k=1}^\infty g_{p,q} nmjk A_1^{N+n+m+p}A_2^{N+j+k-p}B_1^{N+n+j+q}B_2^{N+m+k-q}.
\end{equation}
This labelling gives us a systematic way to look for coefficients that match the term $A_1^{\alpha_1}A_2^{\alpha_2}B_1^{\beta_1}B_2^{\beta_2}$. Note that each of $\alpha_1, \alpha_2, \beta_1$ and $\beta_2$ must be greater than or equal to $N+1$. 

Matching exponents gives the equations
\begin{align}
    N+n+m+p &= \alpha_1 \label{eqn:exp1}
\\  N+j+k-p &= \alpha_2 \label{eqn:exp2}
\\  N+n+j+q &= \beta_1  \label{eqn:exp3}
\\  N+m+k-q &= \beta_2. \label{eqn:exp4}
\end{align}
Subtracting (\ref{eqn:exp3}) from  (\ref{eqn:exp1}) and subtracting (\ref{eqn:exp4}) from (\ref{eqn:exp2}) and subtracting (\ref{eqn:exp4}) from (\ref{eqn:exp1}) and rearranging gives (respectively)
\begin{align}
    m &= \alpha_1-N-n-p.
\\  j &= \beta_1-N-n-q.
\\  k &= n-\alpha_1+\beta_2+p+q.
\end{align}
Summing over $n$ (and using the positive part function $(x)_+=\max(x,0)$) all terms in \eqref{eq:mathcalS2} containing $A_1^{\alpha_1}A_2^{\alpha_2}B_1^{\beta_1}B_2^{\beta_2}$ can be written as
\begin{align}
\sum_{p,q\in I}\sum_{n=1}^\infty &g_{p,q} n(\alpha_1-N-n-p)_+(\beta_1-N-n-q)_+
\\\times&(n-\alpha_1+\beta_2+p+q)_+ A_1^{\alpha_1}A_2^{\alpha_2}B_1^{\beta_1}B_2^{\beta_2}. \nonumber
\end{align}

Hence, from \eqref{eq:traces-as-coeff}, the contribution to 
\begin{eqnarray}
&& \int_{U(N)} \Tr((U^\dagger)^{\alpha_1}) \Tr((U^\dagger)^{\alpha_2})\Tr(U^{\beta_1})\Tr(U^{\beta_2}) dU
\end{eqnarray}
from two-swap terms is
\begin{eqnarray}
&& \qquad\qquad\sum_{n=1}^\infty \sum_{p,q\in I} g_{p,q} n (\alpha_1-N-n-p)_+(\beta_1-N-n-q)_+(n-\alpha_1+\beta_2+p+q)_+. \label{eqn:coeff}
\end{eqnarray}
We find that for many values of $n$, the summand turns out to be zero. 

For a given $n$, if any of the three expressions inside the  positive part functions is negative for all $p,q\in I$, then the sum is necessarily zero. 

If all three expressions inside the  positive part functions are non-negative for all $p,q\in I$ (we'll determine for which $n$  this is true later), then we have an expression like
\begin{equation}\label{eq:E}
 \sum_{p,q\in I} g_{p,q} (A-p)(B-q)(C+p+q)
\end{equation}
for some $A$, $B$, and $C$, which are constant relative to $p$ and $q$. We now show that this type of term cancels to zero.

We rearrange \eqref{eq:E} as a quadratic in $p$, so we have
\begin{equation}
 \sum_{p,q\in I}g_{p,q}(B-q)(A(C+q) + Ap-p(C+q)-p^2).
\end{equation}
Now let $p_i=\sum_{p\in I}g_{p,q}p^i$ for $i=0,1,2$, which lets us rewrite the above  as
\begin{equation}\label{eq:EE2}
     \sum_{q\in I}(B-q)(A(C+q)p_0 + (A-C-q)p_1-p_2).
\end{equation}
But we can see that $p_0$ and $p_1$ are zero, because
\begin{align}
    p_0 &= (-2)^{1-|q|}+(-2)^{2-|q|}+(-2)^{1-|q|}=(-2)^{1-|q|}(1-2+1)=0 \label{eq:p0}
\\  p_1 &= (-2)^{1-|q|}(-1) + 0 + (-2)^{1-|q|}(1) = 0,\label{eq:p1}
\end{align}
and also that $p_2 = (-1)^2g_{-1,q}+0+g_{1,q}=2g_{1,q}$, because $g_{-p,q}=g_{p,q}$.
So our sum \eqref{eq:EE2} reduces to
\begin{equation}
 \sum_{q\in I}2g_{1,q}(q-B).
\end{equation}
By the symmetry of $p$ and $q$, this is $0$ in a similar manner to \eqref{eq:p0} and \eqref{eq:p1}.

 The previous conclusions relied on the fact that, for a given $n$, there was no change of sign  in the three expressions in the positive part functions in \eqref{eqn:coeff} as $p$ and $q$ varied.  

For a bracket like $(A-p)_+$ with $p\in\{-1,0,1\}$, we have that the inside takes values $A-1,A,A+1$. We can see that if $A=-1$ or less, then $(A-p)_+$ would be zero for all three values of $p$.  Thus the whole sum in \eqref{eqn:coeff} would go to zero if $\alpha_1-N-n\leq -1$.  If $A=1$ or more, then $(A-p)_+=(A-p)$, so if $\alpha_1-N-n\geq 1$, the sum in \eqref{eqn:coeff} would be zero too, by the argument starting at \eqref{eq:E}, as long as neither of the other expressions in the positivity functions in \eqref{eqn:coeff} change sign as $p$ and $q$ vary.  If $A=0$ then $A-p$ changes sign as $p$ varies in $\{-1,0,1\}$, so we need to consider the case $\alpha_1-N-n=0$ carefully. 

Similarly for the factor $(\beta_1-N-n-q)_+$, if $\beta_1-N-n\leq 1$, the whole sum in \eqref{eqn:coeff} is zero.  If $\beta_1-N-n\geq 1$, then the sum is zero as long as the other positivity functions don't change sign as $p$ and $q$ vary.  We have to pay attention to the case $\beta_1-N-n=0$. 

For the $(C+p+q)_+$ bracket, the inside takes values $C-2, C-1, C,C+1,C+2$. We have that if $C=-2$ or less then all five cases are zero.  So if $n-\alpha_1+\beta_2\leq -2$ then the sum in \eqref{eqn:coeff} goes to zero.  If  $n-\alpha_1+\beta_2\geq 2$ then the sum will go to zero too unless one of the other positivity functions changes sign as $p$ and $q$ vary. It is the cases $n-\alpha_1+\beta_2\in\{-1,0,1\}$ that we need to treat carefully.

We will start with $\alpha_1-N-n=0$. In this case $n=\alpha_1-N$, but since $n$ has to be non-negative, we can use the positive part of this when we write out which $n$ would contribute. Similarly the second bracket contributes for $n=\beta_1-N$, and the last one contributes for $n-\alpha_1+\beta_2\in\{-1,0,1\}$, so $n\in\{\alpha_1-\beta_2-1,\alpha_1-\beta_2,\alpha_1-\beta_2+1\}$. Thus the only terms that potentially contribute to \eqref{eqn:coeff} are
\begin{equation}\label{eq:5n}
    \sum_{\substack{n\in\{(\alpha_1-N)_+,\\(\beta_1-N)_+,\\(\alpha_1-\beta_2-1)_+,\\(\alpha_1-\beta_2)_+,\\(\alpha_1-\beta_2+1)_+\}}} \sum_{p,q\in\{-1,0,1\}} g_{p,q} n (\alpha_1-N-n-p)_+(\beta_1-N-n-q)_+(n-\alpha_1+\beta_2+p+q)_+.
\end{equation}
We have a maximum of 45 terms, but we will see that only a few can be non-zero.  We also note that if any of the $n$ which we sum over are repeated, e.g. if $\alpha_1-N=\alpha_1-\beta_2+1$, then we are in danger of including the term twice.  We will address this below.

Finally, the use of the positive part notation for the $n$ cases is significant. It works because if $\alpha_1-N$ was negative, for example, the summand would be multiplied by $n=(\alpha_1-N)_+=0$ and disappear.

For the $n=(\alpha_1-N)_+$ case, we get (since only the $p=-1$ term is non-zero) the term
\begin{align}
    (\alpha_1-N)_+\sum_{p,q\in I}g_{p,q}(-p)_+&(\beta_1-\alpha_1-q)_+(\beta_2-N+p+q)_+ \label{eqn:terma1} \nonumber
\\  =(\alpha_1-N)_+\sum_{q\in I}g_{1,q}&(\beta_1-\alpha_1-q)_+(\beta_2-N-1+q)_+.
\end{align}

For the $n=(\beta_1-N)_+$ case, we get the term
\begin{equation}
    (\beta_1-N)_+\sum_{p\in I}g_{p,1}(\alpha_1-\beta_1-p)_+(\alpha_2-N-1+p)_+. \label{eqn:termb1}
\end{equation}

 In order for (\ref{eqn:terma1}) to contribute, it needs for $(\beta_1-\alpha_1-q)_+$ to be non-zero for at least some values of $q$, hence it needs to have $\beta_1-\alpha_1+1>0$, and so have $\beta_1\geq\alpha_1$. Similarly for \eqref{eqn:termb1} to contribute, we must have $\alpha_1\geq\beta_1$.

If $\alpha_1\neq\beta_1$ then only one of (\ref{eqn:terma1}) or \eqref{eqn:termb1} will contribute, and if $\alpha_1=\beta_1$ then as $n=N-\alpha_1=N-\beta_1$ we only need to consider one of the expressions anyway because they come from the same value of $n$ in the sum in (\ref{eq:5n}).

Since only one of these terms will show up at a time, we can combine them by replacing $(\alpha_1-N)_+$ and $(\beta_1-N)_+$ by $(\min\{\alpha_1,\beta_1\}-N)_+$. We replace the first positivity bracket in  (\ref{eqn:terma1}) and \eqref{eqn:termb1} with  $(|\alpha_1-\beta_1|-r)_+$.

For the last bracket inside the sum, if $\beta_1\geq\alpha_1$ then it is $(\beta_2-N-1+q)_+$, if the opposite is true, then it is $(\alpha_2-N-1+p)_+$. This behaviour can be expressed (thanks to the balancing condition) as $(\beta_2 - (\alpha_1-\beta_1)_+-N-1+r)_+$. In total, from the cases $n=(\alpha_1-N)_+$ and $n=(\beta_1-N)_+$ we get a combined term of
\begin{equation}
    (\min\{\alpha_1,\beta_1\}-N)_+\sum_{r\in I}g_{r,1}(|\alpha_1-\beta_1|-r)_+(\beta_2-(\alpha_1-\beta_1)_+-N-1+r)_+. \label{eqn:combined}
\end{equation}

This term  only exists when all $\alpha_1$, $\alpha_2$, $\beta_1$, $\beta_2>N$ because of it  being a two-swap term where all $\alpha$'s and $\beta$'s are raised to the power $N$ in \eqref{eq:mathcalS}. We can see this condition in some of the positivity notation in \eqref{eqn:combined}.

Now for the $n=(\alpha_1-\beta_2+r)_+$ cases, which we consider all at once by summing over $r\in I$. Calling this combined contribution $E$, this gives us
\begin{equation}
E=\sum_{r\in I}    (\alpha_1-\beta_2+r)_+\sum_{p,q\in I}g_{p,q}(\beta_2-N-r-p)_+(\alpha_2-N-r-q)_+(p+q+r)_+.
\end{equation}
We can write this expression in terms of $A=\alpha_1-\beta_2,B=\beta_2-N,$ and $C=\alpha_2-N$ as
\begin{equation}
    E=\sum_{p,q,r\in I}  g_{p,q} (A+r)_+(B-r-p)_+(C-r-q)_+(p+q+r)_+.
\end{equation}


If $A<0$ or $B<1$ or $C<1$ then the whole expression is $0$. For example the $A<0$ case means $(A+1)_+,(A)_+,$ and $(A-1)_+$ are all $0$.

We look first at the case $A=0$. Then only the $r=1$ term remains, which becomes
\begin{equation}
    E_{A=0}=\sum_{p,q\in I }g_{p,q}(B-1-p)_+(C-1-q)_+(p+q+1)_+.
\end{equation}
If, in addition, $B=1$ then the only $(B-1-p)_+$ term which survives is the $p=-1$ case, and since $p+q>-1$, $q=1$ is forced and we get $E_{A=0;B=1}=g_{-1,1}(C-2)_+=(C-2)_+$.
Similarly, if $A=0$ and $C=1$ then we get $E_{A=0;C=1}=(B-2)_+$.  In particular, if one of $B$ or $C$ is 1, then the other must be greater than 2 for a non-zero contribution, but in any case we can combine these cases by saying if $A=0$ and either $B$ or $C$ is 1, then the contribution is $(\max\{B,C\}-2)_+$.

If $A=0$ and $B,C>1$ then all of the positive part brackets have non-negative insides, and so we can drop the positive parts and work with the sum as is. We get
\begin{equation}
   E_{A=0;B,C>1} \sum_{p,q\in I }g_{p,q}(B-1-p)(C-1-q)(p+q+1)_+=BC.
\end{equation}

Now we consider the case that $A>0$. Then we have $A-1\geq0$ and so for all $r$, $(A+r)_+=(A+r)$, which allows us to write
\begin{equation}
    E_{A>0}=\sum_{p,q,r\in I}  g_{p,q} (A+r)(B-r-p)_+(C-r-q)_+(p+q+r)_+.
\end{equation}

If $A>0$ and $B=C=1$, then only $r=-1$ can contribute by the following argument.   If $r=0$, since we get contribution only when $1-r-p>0$,  we must have $p<1$. Similarly the $(C-r-q)_+$ factor requires $q<1$ for non-zero contribution. However, the factor $(p+q+r)_+$ is zero if $p<1$ and $q<1$, so there is no contribution from $r=0$.   If $r=1$, then $(B-r-p)_+$  is non-zero when $p<0$, implying  $p=-1$, and similarly from the $(C-r-q)_+$ factor $q<0$, but then $p+q+r<0$, so we get no contribution. If $r=-1$, then $(p+q+r)_+$ is non-zero when $p=q=1$ and so  $E_{A>0;B=C=1}=g_{1,1}(A-1)(1+1-1)(1+1-1)(1+1-1)=(A-1)$.

If $A>0$, $B=1$ and $C>1$, then we write
\begin{equation}
    E_{A>0;B=1;C>1}=\sum_{p,q,r\in I}  g_{p,q} (A+r)(1-r-p)_+(C-r-q)_+(p+q+r)_+.
    \end{equation}
If $r=1$, then the $(1-r-p)_+$ factor is only positive for $p=-1$, and then $(p+q+r)_+$ forces  $q=1$.  Hence we get a contribution of $g_{1,1}(A+1)(1)(C-2)(1)=(A+1)(C-2)$. If $r=0$ then $1-r-p$ could be positive for $p=0,-1$ but since $p+q>0$, by the final factor,  the only viable choice is $p=0$ and $q=1$, and so we get $g_{0,1}A(1)(C-1)(1)=A(-2C+2)$. If $r=-1$ then $(p+q+r)_+$ forces  $p=q=1$ and we get $g_{1,1}(A-1)(1)(C+1-1)(1)=(A-1)C$. Hence the total contribution is $E_{A>0;B=1;C>1}=(A+1)(C-2)+A(-2C+2)+(A-1)C=-2$.  Because of the symmetry, $E_{A>0;B>1;C=1}=-2$ also.

If $A>0$ and  $B,C>1$, then we write
\begin{align}
    E_{A>0;B,C>1} =&\sum_{p,q,r\in I} g_{p,q}(A+r)(B-r-p)(C-r-q)(p+q+r)_+
\\    =& (A+1)[(B-2)C+B(C-2)+4(B-1)(C-1)
\\     & -4(B-2)(C-1)-4(B-1)(C-2)+3(B-2)(C-2)] \nonumber
\\     & +A[-2B(C-1)-2(B-1)C+2(B-1)(C-1)] \nonumber
\\     & +(A-1)BC \nonumber
\\    =& 2A.
\end{align}

So in summary,
\begin{align}
    E=
    \begin{cases}
        0 & A<0\  \text{or}\  B<1\text{or}\ C<1
    \\  (\max\{B,C\}-2)_+ & A=0\ \text{and one or both of}\ B\ \text{or}\ C \ \text{is 1}
        \\  BC & A=0\ \text{and}\ B>1\ \text{and}\ C>1
    \\  A-1 & A>0\ \text{and}\ B=C=1
    \\  -2 & A>0\ \text{and}\ \min\{B,C\}=1,\max\{B,C\}>1
    \\  2A & A>0\ \text{and}\ B>1\ \text{and}\ C>1.
    \end{cases}
\end{align}

Recalling definitions of $A,B$ and $C$, we get the combined contribution from $n=\alpha_1-\beta_2-1,\alpha_1-\beta_2$ and $\alpha_1-\beta_2+1$:
\begin{align}\label{eq:EaabbN}
    E(\alpha_1,\alpha_2,\beta_1,\beta_2,N)=E=
    \begin{cases}
        0 & \alpha_1<\beta_2\ \text{or}\ \alpha_2\leq N\ \text{or}\ \beta_2\leq N
    \\  (\max\{\alpha_2,\beta_2\}-N-2)_+ & \alpha_1=\beta_2\ \text{and one or both of}\ \alpha_2 \ \text{or}\ \beta_2 \ \text{is}\ N+1
    \\  (\alpha_2-N)(\beta_2-N) & \alpha_1=\beta_2\ \text{and}\ \alpha_2>N+1\ \text{and}\ \beta_2>N+1
    \\  \alpha_1-\beta_2-1 & \alpha_1>\beta_2\ \text{and}\ \alpha_2=\beta_2=N+1
    \\  -2 & \alpha_1>\beta_2\ \text{and}\ \min\{\alpha_2,\beta_2\}=N+1\
    \\  &  \text{and}\ \max\{\alpha_2,\beta_2\}>N+1
    \\  2(\alpha_1-\beta_2) & \alpha_1>\beta_2\ \text{and}\ \alpha_2>N+1\ \text{and}\ \beta_2>N+1.
    \end{cases}
\end{align}

When we combine \eqref{eqn:combined} and  \eqref{eq:EaabbN}  we need to account for the fact that if $\alpha_1-N=\alpha_1-\beta_2+1$, for example, we would be double counting this if we summed the (\ref{eqn:combined}) and $E$.  We need to have $\alpha_1-N\neq\alpha_1-\beta_2-r$ for any $-1\leq r\leq1$, which can be rearranged into $N\neq\beta_2+r$, and using $\beta_1-N\neq\alpha_1-\beta_2-r$ and $\alpha_1+\alpha_2=\beta_1+\beta_2$, we similarly get $N\neq\alpha_2+r$ for any $-1\leq r\leq 1$. This amounts to saying that we have duplication in the $n$ values when $\beta_2=N+1$ or $\alpha_2=N+1$, as we never see exponents less than $N+1$ in two-swap terms (see comment after \eqref{eq:mathcalS3}).  This results in the following 2-swap contribution (where the indicator function is 1 when both conditions hold and 0 otherwise)
\begin{gather}
    \mathbf{1}_{\beta_2> N+1,\alpha_2>N+1}(\min\{\alpha_1,\beta_1\}-N)_+
\\  \times\sum_{r\in I}g_{r,1}(|\alpha_1-\beta_1|-r)_+(\beta_2-(\alpha_1-\beta_1)_+-N-1+r)_+ \nonumber
\\  +E(\alpha_1,\alpha_2,\beta_1,\beta_2,N). \nonumber
\end{gather}

\section{General trace moments}

The designation of terms as zero-swap, one-swap, etc, is not just convenient book-keeping.  It was shown in a series of papers \cite{kn:consna14, kn:massna16b,chandee,kn:demsna,kn:chaleeli} to be of critical importance when one studies the $n$-correlation of eigenvalues of random matrices from one of the classical compact groups with Haar measure.  This in turn yields information on the $n$-correlation of zeros of the Riemann zeta function and other $L$-functions through the now familiar number theory/random matrix theory connection (see for example \cite{kn:keasna00a,kn:keasna00b,kn:consna06,kn:consna07,kn:consna08,kn:massna16}). For example, if one investigates
\begin{equation}\label{eq:examplecorr}
 \int_{U(N)} \sum_{j_1,\dots,j_n} F(\theta_{j_1},\dots,\theta_{j_n}) dU
 \end{equation}
 in order to probe the behaviour of $n$-tuples of eigenvalues, the support of the Fourier transform of the test function $F$ plays a large role.  In number theory, the analogous calculations can often only be carried out for test functions whose Fourier transforms have very restricted support.  It turns out that a quantity like \eqref{eq:examplecorr} can be calculated using $J^*$ in Theorem \ref{theo:J}. If the Fourier transform of the test function has support only when the sum of certain Fourier variables is less than $2q-\epsilon$, for some positive $\epsilon$, then {\em only} terms in $J^*$ with $|S|=|T|<q$ contribute.  In a similar vein, we have already seen in our examples that one-swap terms cannot contribute to trace averages where the sum of the $\alpha$ exponents is less than  $N+1$ and two-swap terms cannot contribute to trace averages when the sum of the exponents is less than $2N+2$. So there is benefit in calculating low swap terms separately.  
 
In the following sections, we will address the zero- and one-swap terms for a general trace moment, one of the consequences of which is the following theorem, which approximately doubles the range of Diaconis and Shahshahani's original result:

 \begin{theorem}\label{theo:generalfull}
     Given integers $\alpha_1,\ldots,\alpha_K,\beta_1,\ldots,\beta_L$, for $\alpha_1+\cdots+\alpha_K=\beta_1+\cdots+\beta_L<2N+2$, we have, with $\mathcal K=\{1, \ldots, K\}$ and $\mathcal L=\{1,\ldots, L\}$,
     \begin{eqnarray}
        && \int_{U(N)}\Tr((U^\dagger)^{\alpha_1})\cdots \Tr((U^\dagger)^{\alpha_K})\Tr(U^{\beta_1})\cdots\Tr(U^{\beta_L})dU=\nonumber \\
         &&\delta(K=L)\sum_{({\mathcal K}:{\mathcal L})} \left( \beta_1\cdots \beta_{L} \delta(\alpha_{k_1}=\beta_{\ell_1}, \alpha_{k_2}=\beta_{\ell_2}, \ldots, \alpha_{k_{L}}=\beta_{\ell_{L}})\right)\nonumber \\
         &&+\sum_{\genfrac{}{}{0pt}{1}{R\subsetneq \mathcal K,\, Q \subsetneq \mathcal L}{|R| = |Q|}} \left(\sum_{(R:Q)} \left(\prod_{j=1}^{|{\mathcal Q}|}\beta_{q_j}\right)\left(\delta(\alpha_{r_1}=\beta_{q_1},\ldots,\alpha_{r_{|{\mathcal Q}|}}=\beta_{q_{|{\mathcal Q}|}})\right)\right)\nonumber \\
 &&\times \sum_{ \substack{S \subset R^c,T\subseteq Q^c\\|S|<|R^c|\\0<|T|}}(-1)^{|T|+|R^c|-|S|-1}\left(\sum_{t\in T}\beta_t-\sum_{s\in S}\alpha_s-N\right)_{+}.
     \end{eqnarray}
     Here the notation $({\mathcal K}:{\mathcal L})$, for two sets of the same cardinality,  indicates summing over all possible pairings of one element from  $\mathcal K$ with one element of $\mathcal L$, with the pairs labelled $(a_{k_1},b_{\ell_1})$, $(a_{k_2},b_{\ell_2})\ldots$ $(a_{k_{|\mathcal K|}},b_{\ell_{|\mathcal L|}})$. A $\delta$ function is zero unless all the conditions are true, in which case it is 1. The set $R^c$ is the complement of $R$ in $K$. The positivity notation $(x)_+$ is zero if $x\leq 0$ and $x$ otherwise. 
 \end{theorem}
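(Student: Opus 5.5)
The plan is to apply Theorem \ref{theo:J} with $\mathcal A=\{a_1,\ldots,a_K\}$ and $\mathcal B=\{b_1,\ldots,b_L\}$. We write $A_k=e^{-a_k}$ and $B_\ell=e^{-b_\ell}$, with all moduli less than $1$ and the orderings $|A_K|<\cdots<|A_1|$ and $|B_L|<\cdots<|B_1|$. Then we read off the coefficient of $\prod A_k^{\alpha_k}\prod B_\ell^{\beta_\ell}$ via \eqref{eq:traces-as-coeff}. The first step is to show that only zero- and one-swap terms matter when $\sum\alpha_k<2N+2$. An $n$-swap term carries the prefactor $e^{-N(\sum_S\hat a+\sum_T\hat b)}$ times $Z(S,T)$, whose expansion has total $A$-degree at least $n(N+1)$. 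The factor $Z(S^-,T^-)/Z^\dagger(S,S^-)Z^\dagger(T,T^-)$ only shifts degree between the swapped variables and does not lower the total degree, as seen in \eqref{eq:M}. The $H$ factors with only nonnegative-exponent choices contribute nonnegative degree. So for $n\ge2$ the total $A$-degree is at least $2N+2$, and these terms drop out.

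\emph{Zero-swap.} With $S=T=\varnothing$, every singleton $H_{\varnothing,\varnothing}$ vanishes. Hence only perfect pairings of $\mathcal A\cup\mathcal B$ into pairs $\{a,b\}$ survive, which forces $K=L$. Each pair contributes $(z'/z)'(a+b)=\sum_n n(A B)^n$. Extracting coefficients gives the first line of the theorem, $\sum_{(\mathcal K:\mathcal L)}\prod\beta\,\delta(\cdots)$, exactly as in the $2$--$2$ zero-swap computation.

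\emph{One-swap.} Take $S=\{a_i\}$ and $T=\{b_j\}$. The prefactor is $A_i^NB_j^N\cdot(-A_iB_j)/(1-A_iB_j)^2$. The remaining variables are partitioned into pairs $\{a,b\}$, each giving $(z'/z)'$, and singletons with $H_{S,T}$. We organise the sum by the set of paired indices, which is a matching between $R'\subseteq\mathcal K\setminus\{i\}$ and $Q'\subseteq\mathcal L\setminus\{j\}$. The paired factors factor out and produce $\prod\beta_{q}\,\delta(\alpha_r=\beta_q)$ after coefficient extraction. Setting $R=R'$ and $Q=Q'$ then yields the outer sum over $R\subsetneq\mathcal K$ and $Q\subsetneq\mathcal L$ with $|R|=|Q|$; these are proper subsets because $i\in R^c$ and $j\in Q^c$. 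What remains is a one-swap term for the reduced sets $R^c$ and $Q^c$ with \emph{no} pairings: a product over singletons of
\[
\frac{z'}{z}(a-a_i)-\frac{z'}{z}(a+b_j)\quad\text{and}\quad \frac{z'}{z}(b-b_j)-\frac{z'}{z}(b+a_i).
\]
So the remaining task is a generalisation of Theorem \ref{thrm: 1--n trace moment} to several $A$'s.

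For that singleton-only expression we follow the proof of Theorem \ref{thrm: 1--n trace moment}. First, expand each factor as a geometric series, with orientation decided by the ordering of moduli. Next, discard every choice that leaves a nonpositive exponent on some variable, as was done with $\sigma_p$ there. The surviving choices are then determined by a subset $S$ of the unswapped $\alpha$'s and a subset $T$ of the $\beta$'s; the swapped indices $i$ and $j$ act as the ``smallest element'' marker, so that $T\ni j$ is nonempty and $S\not\ni i$, giving $|S|<|R^c|$. The single free index from $\sum_k k(A_iB_j)^k$ is then fixed to $k=\sum_T\beta_t-\sum_S\alpha_s-N$. The sign is $(-1)$ to the number of ``negative'' choices, giving $(-1)^{|T|+|R^c|-|S|-1}$, and the positivity restriction produces $(\cdot)_+$.

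The main obstacle is this last bookkeeping. In Theorem \ref{thrm: 1--n trace moment} the $\sigma$-choices for $p<i$ were all forced. With several $A$'s, the $A$-singletons $a$ with $|A|>|A_i|$ admit a genuine choice between $A_i/A$-type and $AB_j$-type terms. We must check that the choices with negative exponents cancel across different $i$, and that summing over the marker pair $(i,j)$ reproduces each pair of subsets $(S,T)$ exactly once with the stated sign. I would first verify this against \eqref{eq:2-2oneswap2}, which is the case $K=L=2$ with $R=Q=\varnothing$, and then run an induction on $K$, or argue directly via the minimal-index bijection used in the proof of Theorem \ref{thrm: 1--n trace moment}.
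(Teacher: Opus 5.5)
\textbf{Overall.} Your outline follows the paper's proof. You discard the $\ge 2$-swap terms by total degree, obtain the zero-swap line from perfect pairings, factor each one-swap term over the matched pairs $(R,Q)$ (the paper does this in \eqref{eq:bigS}), and read off the index in $-\sum k(A_iB_j)^k$ from the exponent of $B_j$.

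\textbf{The gap.} The trouble is the step you single out as the main obstacle. It is both left undone and misdiagnosed. An unswapped, unpaired $a_s$ with $|A_s|>|A_i|$ (that is, $s<i$) has \emph{no} genuine choice. In that region $\frac{z'}{z}(a_s-a_i)=\sum_{m\ge 0}(A_i/A_s)^m$ puts an exponent $\le 0$ on $A_s$. Since $A_s$ occurs in no other factor, these terms are monomials different from $\prod A_k^{\alpha_k}\prod B_\ell^{\beta_\ell}$ (whose exponents are all $\ge 1$), so they contribute nothing to its coefficient. No cancellation across different $i$ is needed.

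This is exactly the $\sigma_p$ pruning of Theorem~\ref{thrm: 1--n trace moment}. Because $H_{S,T}(\{a\})$ and $H_{S,T}(\{b\})$ are mirror images, the $A$-side behaves exactly like the $B$-side. The genuine choice is for $s>i$, between $-(A_s/A_i)^{\alpha_s}$ and $(A_sB_j)^{\alpha_s}$. This mirrors $\tau_q$, and the negative power of $A_i$ is absorbed by $A_i^{N+k}$.

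\textbf{Closing the bookkeeping.} Once this is seen, the count closes directly; you do not need to check $K=L=2$ or induct on $K$.
\begin{itemize}
\item Let $S$ be the set of $s\in R^c\setminus\{i\}$ taking the branch $(A_sB_j)^{\alpha_s}$.
\item Let $T=\{j\}\cup\{t\in Q^c,\ t>j,\ \text{taking the branch } -(B_t/B_j)^{\beta_t}\}$.
\item Pruning forces every $s<i$ into $S$ and every $t<j$ out of $T$. Hence $i=\min(R^c\setminus S)$ and $j=\min T$.
\item So contributing configurations $(i,j,\text{branches})$ are in bijection with pairs $S\subsetneq R^c$, $\varnothing\neq T\subseteq Q^c$, each occurring once.
\item The sign is $(-1)^{|R^c|-|S|-1}$ from the $A$-side, times $(-1)^{|T|-1}$ from the $B$-side, times $-1$ from the prefactor.
\end{itemize}
You should also note that the exponent equation for $A_i$ gives the same $k$ only because $\sum_{s\in R^c}\alpha_s=\sum_{t\in Q^c}\beta_t$. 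This holds whenever the delta product over $(R:Q)$ is non-zero, by the global balance. This is precisely the paper's concluding argument; its $R_1$ is your set of $s>i$ taking the $A_s/A_i$ branch.

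\textbf{Minor point on your first step.} The extra $+n$ in your bound $n(N+1)$ comes from $Z(S^-,T^-)$, via $z(x)z(-x)=-X/(1-X)^2$. It does not come from $Z(S,T)$, which starts in degree $0$. The bound itself is correct; the total $A$-degree is in fact at least $nN+n^2$.
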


 \begin{proof}
     The proof follows from the derivation of the zero- and one-swap terms in the following sections, with the additional information that two-swap and higher terms will always have $\alpha_1+\cdots+\alpha_K\geq 2N+2$.  We can see this because any two-swap terms will have factors like \eqref{eq:mathcalS}, where we can see from \eqref{eq:mathcalS2} and \eqref{eq:M} that the two selected $A$'s and $B$'s have exponent at least $N+1$.  In general trace averages, these will be multiplied by various $H_{S,T}(W)$, see \eqref{eqn:Hrmt}, but the contribution of these to the {\em sum} of exponents on all the $A$'s (or $B$'s) is zero or higher, as we can see from \eqref{eq:zeds} and \eqref{eq:zeds_gt_1}. Similarly, for higher swap terms, say $\ell$-swap, there will be $\ell$ $A$'s and $\ell$ $B$'s with exponent at least $N+1$ for the same reason. 
 \end{proof}

 \subsection{Zero-swap terms}

 We examine the zero-swap terms of a general average of traces. This is covered by Kuperberg and Lal{\'i}n in \cite{kn:kuplal}, but we include it here for completeness and to ensure consistency of notation. 
 
 Write a general moment of logarithmic derivatives as
 \begin{eqnarray}\label{eq:generallogderiv}
\int_{U(N)}\prod_{k\in {\mathcal K}}
(-e^{-a_k})\frac{\Lambda_U'}{\Lambda_U}(e^{-a_k})\prod_{\ell\in
{\mathcal L}} (-e^{-b_\ell})\frac{\Lambda_{U^\dagger}'}{\Lambda_{U^\dagger}}(e^{-b_\ell})~ dU,
 \end{eqnarray}
with $\mathcal K=\{1, \ldots, K\}$ and $\mathcal L=\{1,\ldots, L\}$.
 The zero-swap terms in \eqref{eq:generallogderiv}, which we will call ${\mathcal S}_0$, only exist if $\mathcal K$ and $\mathcal L$ have the same cardinality, that is $|\mathcal K|=|\mathcal L|$, so that in \eqref{eqn:Hrmt} we can pair every $a$ with a $b$.  So we have
 \begin{eqnarray}
 {\mathcal S}_0=\mathbf{1}_{|\mathcal K|=|\mathcal L|}  \sum_{({\mathcal K}:{\mathcal L})} \left(\prod_{j=1}^{|{\mathcal L}|} \left(\frac{z'}{z}\right)' (a_{k_j}+b_{\ell_j}) \right),
 \end{eqnarray}
 where the notation $({\mathcal K}:{\mathcal L})$ indicates summing over all possible pairings of one element from  $\mathcal K$ with one element of $\mathcal L$, with the pairs labelled $(a_{k_1},b_{\ell_1})$, $(a_{k_2},b_{\ell_2})\ldots$ etc, and the indicator function is zero if $ |\mathcal K|\neq|\mathcal L|$.
 
  With $ |\mathcal K|=|\mathcal L|$, in the usual manner we will let $A_j=e^{-a_j}$ and $B_j=e^{-b_j}$, $|A_j|,|B_j|<1$, for $j= 1,2,\ldots,|\mathcal L|$, and $|A_{|\mathcal L|}|<\cdots<|A_1|$ and $|B_{|\mathcal L|}|<\cdots<|B_1|$.  So
  \begin{eqnarray}
 { \mathcal S_0}&=&\mathbf{1}_{|\mathcal K|=|\mathcal L|}  \sum_{({\mathcal K}:{\mathcal L})} \left(\prod_{j=1}^{|{\mathcal L}|} \frac{A_{k_j}B_{\ell_j}}{(1-A_{k_j}B_{\ell_j})^2}\right)\nonumber \\
 &=&\mathbf{1}_{|\mathcal K|=|\mathcal L|}  \sum_{({\mathcal K}:{\mathcal L})} \left(\prod_{j=1}^{|{\mathcal L}|}\left(\sum_{n_j=1}^\infty n_j(A_{k_j}B_{\ell_j})^{n_j} \right)\right)\nonumber \\
 &=&\mathbf{1}_{|\mathcal K|=|\mathcal L|}  \sum_{({\mathcal K}:{\mathcal L})}\; \sum_{n_1=1}^\infty\cdots \sum_{n_{|\mathcal L|}=1}^\infty n_1\cdots n_{|\mathcal L|}A_{k_1}^{n_1}\cdots A_{k_{|\mathcal L|}}^{n_{|\mathcal L|}}B_{\ell_1}^{n_1}\cdots B_{\ell_{|{\mathcal L}|}}^{n_{|\mathcal L|}}.
 \end{eqnarray}
 
 So, the contribution from zero-swap terms to the trace average 
 \begin{equation} \int_{U(N)}\Tr((U^\dagger)^{\alpha_1})\cdots \Tr((U^\dagger)^{\alpha_K})\Tr(U^{\beta_1})\cdots\Tr(U^{\beta_L})dU
 \end{equation} is
 \begin{eqnarray}\label{eq:generalzeroswap}
 \mathbf{1}_{K=L}\sum_{({\mathcal K}:{\mathcal L})} \left( \beta_1\cdots \beta_{L} \delta(\alpha_{k_1}=\beta_{\ell_1}, \alpha_{k_2}=\beta_{\ell_2}, \ldots, \alpha_{k_{L}}=\beta_{\ell_{L}})\right).
 \end{eqnarray}
 Note that this must coincide with the formula of Diaconis and Shahshahani \eqref{eq:DS} if $\beta_1+\cdots +\beta_L\leq N$, as it is only zero-swap terms that contribute in this range. Diaconis and Shahshahani write their trace averages slightly differently, recording multiplicity in the exponents, \begin{equation}\label{eq:DS1}
    \int_{U(N)} \prod_{j=1}^l ({\rm Tr} M^j)^{a_j}\overline{({\mathrm Tr}M^j)}^{b_j} dU = \delta_{\mathbf{ab}} \prod_{j=1}^l j^{a_j} a_j!
\end{equation}
(note these $a$'s and $b$'s are different from those in the average of logarithmic derivatives elsewhere in this paper). We can see that \eqref{eq:generalzeroswap} and \eqref{eq:DS1} are both zero if the list of exponents on the traces of the $U$'s is not equal to the list of exponents on the daggered traces. The product $\prod_{j=1}^lj^{a_j}$ in \eqref{eq:DS1} is just the product of the list of exponents, and so corresponds to  $ \beta_1\cdots \beta_{L} $.  If the $\alpha$'s are all distinct then we have only one non-zero delta function in the sum \eqref{eq:generalzeroswap} and all the $a_j$ are 1 in \eqref{eq:DS1}. But if there is a set of the $\alpha$'s (and therefore also the $\beta$'s) that are equal with multiplicity $a$, then there are $a!$ ways to assign the duplicated $\alpha$'s to the set of equal $\beta$'s.  Thus there will be $a!$ non-zero delta functions in \eqref{eq:generalzeroswap} that capture these arrangements.  This extends naturally to any number of blocks of identical exponents, and so accounts for the $a_j!$ in \eqref{eq:DS1}. 

 \subsection{One-swap terms}

 The one-swap terms of \eqref{eq:generallogderiv}, which we will call ${\mathcal S_1}$, are written in a slightly different form (merely a rearrangement of \eqref{eq:J*}) in equation (64) of \cite{kn:demsna}. We will use this as our starting point.  Below the sets $R$ and $Q$ select first the pairs containing one $a$ and one $b$ that will contribute to $H_{S,T}(W)$ in \eqref{eqn:Hrmt} and the indices $k$ and $l$ identify the single element of $S$ and single element of $T$ in \eqref{eq:J*}.
 \begin{eqnarray}
 {\mathcal S}_1&=& \sum_{\genfrac{}{}{0pt}{1}{R\subsetneq \mathcal K,\, Q \subsetneq \mathcal L}{|R| = |Q|}} \left(\sum_{(R:Q)} \prod_{j=1}^{|Q|} \left(\frac{z'}{z}\right)' (a_{r_j} +b_{q_j})\right) \nonumber \\
    &&\times 
     \sum_{R_1 \subsetneq R^c,\, Q_1 \subsetneq Q^c} \, \sum_{k \in R_1^c,\, l \in Q_1^c}  e^{-N(a_{k} + b_{l})}z(a_k +b_l)z(-a_k-b_l) \nonumber \\ 
     && \times \left(\prod_{s \in R_1}\frac{z'}{z}(a_s-a_k)\right)
    \left(\prod_{s \in R_1^c\backslash\{k\}} -\frac{z'}{z}(a_s+b_l)
    \right) \nonumber \\
    && \times \left(\prod_{t \in Q_1}\frac{z'}{z}(b_t-b_l)\right)
    \left(\prod_{t \in Q_1^c\backslash\{l\}} -\frac{z'}{z}(b_t+a_k)\right),
 \end{eqnarray}
 with $\mathcal K=\{1, \ldots, K\}$ and $\mathcal L=\{1,\ldots, L\}$.  The complement notation means that $R^c$ is the complement of $R$ in $\mathcal K$.  Similarly $R_1^c$ is the complement of $R_1$ in $R^c$. The set $R_1$ selects the singleton sets $W=\{a\}$ in \eqref{eqn:Hrmt} from which we select the first term of $H_{S,T}(W)$ when we expand the $H$'s, and the set $R_1^c\backslash\{k\}$ selects singleton sets $W=\{a\}$ in \eqref{eqn:Hrmt} from which we select the second term of $H_{S,T}(W)$. Including $k$ in $R_1^c$ is just a convenience used in \cite{kn:demsna} that we continue here.  The notation on the $\mathcal L$ side is the same. 
 
 Once again $A_j=e^{-a_j}$ and $B_j=e^{-b_j}$, $|A_j|,|B_j|<1$, $|A_{K}|<\cdots<|A_1|$ and $|B_{L}|<\cdots<|B_1|$. So
  \begin{eqnarray}\label{eq:bigS}
 {\mathcal S}_1&=& \sum_{\genfrac{}{}{0pt}{1}{R\subsetneq \mathcal K,\, Q \subsetneq \mathcal L}{|R| = |Q|}} \left(\sum_{(R:Q)} \prod_{j=1}^{|Q|} \left(\sum_{n_j=0}^\infty n_jA_{r_j}^{n_j}B_{q_j}^{n_j}\right)\right) \nonumber \\
    &&\times 
     \sum_{R_1 \subsetneq R^c,\, Q_1 \subsetneq Q^c} \, \sum_{k \in R_1^c,\, l \in Q_1^c}  A_k^NB_l^N\left(-\sum_{n=0}^\infty nA_k^nB_l^n \right)\nonumber \\ 
     && \times \left(\prod_{\substack{s \in R_1\\s<k}}\left(\sum_{n_s=0}^\infty A_k^{n_s}A_s^{-n_s}\right)\right)\left(\prod_{\substack{s \in R_1\\s>k}}\left(-\sum_{n_s=0}^\infty A_s^{n_s+1}A_k^{-n_s-1}\right)\right)  \left(\prod_{s \in R_1^c\backslash\{k\}}\left( \sum_{n_s=0}^\infty A_s^{n_s+1}B_l^{n_s+1}\right)
    \right) \nonumber \\
    && \times  \left(\prod_{\substack{t \in Q_1\\t<l}}\left(\sum_{n_t=0}^\infty B_l^{n_t}B_t^{-n_t} \right)\right) \left(\prod_{\substack{t \in Q_1\\t>l}}\left(-\sum_{n_t=0}^\infty B_t^{n_t+1}B_l^{-n_t-1} \right)\right) 
    \left(\prod_{t \in Q_1^c\backslash\{l\}}\left( \sum_{n_t=0}^\infty B_t^{n_t+1}A_k^{n_t+1}\right)\right).
 \end{eqnarray}
There are several things to notice about \eqref{eq:bigS}. We only see terms where the total sum of exponents on all the $A$'s equals the total sum of exponents on all the $B$'s.  And since the $A$'s and $B$'s in $\mathcal R$ and $\mathcal Q$ have pairwise equal exponents, this means that the total sum of exponents on all the $A$'s in $\mathcal R^c$ is equal to the total sum of exponents on all the $B$'s in $\mathcal Q^c$. We notice that if there are any $s\in R_1$ with $s<k$, then the exponent on $A_s$ is non-positive (see the first bracket on the penultimate line of \eqref{eq:bigS}) and so can't contribute to a term of the form $A_1^{\alpha_1}\cdots A_{K}^{\alpha_{K}}B_1^{\beta_1}\cdots B_{L}^{\beta_L}$ where $\alpha_1,\ldots,\alpha_K,\beta_1,\ldots, \beta_L\geq 1$.  Therefore, contributing terms have no $s\in R_1$ with $s<k$ and similarly no $t\in Q_1$ with $t<l$ (from the first bracket on the last line of \eqref{eq:bigS}).  Inside the sum over $R_1$ and $Q_1$, if we seek to match exponents with a term of the form $A_1^{\alpha_1}\cdots A_{K}^{\alpha_{K}}B_1^{\beta_1}\cdots B_{L}^{\beta_L}$, the exponent on $B_l$  will be $\beta_l=N+n-\sum_{t\in Q_1, t>l}\beta_t+\sum_{s\in R_1^c\backslash\{k\}}\alpha_s$ or in other words $n=(\beta_l+\sum_{t\in Q_1, t>l}\beta_t-\sum_{s\in R_1^c\backslash\{k\}}\alpha_s-N)_+$. Note that the combination of $\alpha$'s and $\beta$'s in this expression for $n$ cannot be empty, as it contains at least $\beta_l$, and it cannot include the full set of all $\alpha$'s and $\beta$'s because it cannot include $\alpha_k$. In addition, $l$ is always the smallest index on a $\beta$ in this expression for $n$. We would end up with the same expression for $n$ if we focused instead on the exponent of $A_k$, $\alpha_k=N+n-\sum_{s\in R_1, s>k}\alpha_s+\sum_{t\in Q_1^{c}\backslash\{l\}}\beta_t$, because of the total sum of exponents on all the $A$'s in $\mathcal R^c$ equaling the total sum of exponents on all the $B$'s in $\mathcal Q^c$ mentioned above.

So, the contribution from one-swap terms to the trace average 
 \begin{equation} \int_{U(N)}\Tr((U^\dagger)^{\alpha_1})\cdots \Tr((U^\dagger)^{\alpha_K})\Tr(U^{\beta_1})\cdots\Tr(U^{\beta_L})dU
 \end{equation} is
 \begin{eqnarray}\label{eq:generaloneswap}
 &&\sum_{\genfrac{}{}{0pt}{1}{R\subsetneq \mathcal K,\, Q \subsetneq \mathcal L}{|R| = |Q|}} \left(\sum_{(R:Q)} \left(\prod_{j=1}^{|{\mathcal Q}|}\beta_{q_j}\right)\left(\delta(\alpha_{r_1}=\beta_{q_1},\ldots,\alpha_{r_{|{\mathcal Q}|}}=\beta_{q_{|{\mathcal Q}|}})\right)\right)\nonumber \\
 &&\times \sum_{ \substack{S \subset R^c,T\subseteq Q^c\\|S|<|R^c|\\0<|T|}}(-1)^{|T|+|R^c|-|S|-1}(\sum_{t\in T}\beta_t-\sum_{s\in S}\alpha_s-N)_{+}.
\end{eqnarray}
 The number $|T|+|R^c|-|S|+1$ counts the minus signs in \eqref{eq:bigS} as described below.   This expression agrees with all of the smaller cases in previous sections. 
 
 We arrive at \eqref{eq:generaloneswap} because the set $R_1^c\backslash \{k\}$ in $n=(\beta_l+\sum_{t\in Q_1, t>l}\beta_t-\sum_{s\in R_1^c\backslash\{k\}}\alpha_s-N)_+$ will hit every subset of $R^c$ (except $R^c$ itself) as we sum over $R_1$ and the choice of $k\in R_1^c$ in \eqref{eq:bigS}, so this is represented in \eqref{eq:generaloneswap} by the sum over the subsets $S\subset R^c, |S|<|R^c|$. The sign $(-1)^{|R^c|-|S|-1}$ arises because there is a minus sign (in the middle bracket of the penultimate line of \eqref{eq:bigS}) associated with each $A_i$ where $i\in R^c$ is {\em not} chosen for $S$ and also not equal to $k$.  Similarly,  $\beta_l+\sum_{t\in Q_1, t>l}\beta_t$ can be made to include any non-empty set of $\beta_i$'s, with $i\in Q^c$, by the appropriate choice of $Q_1$ and $l$ in the sum in \eqref{eq:bigS}. This is represented by $T\subseteq Q^c$, $0<|T|$, and the sign $(-1)^{|T|}$ arises because there is a minus sign (in the middle bracket of the last line of \eqref{eq:bigS}) associated with each $B_i$ with $i\in T$. 
 
 It remains to argue that each choice of $S$ and $T$ occurs only once in the sums in \eqref{eq:bigS}. We could imagine, perhaps, the same set of $\beta$'s occurring in different ways, with different $\beta$'s playing the role of $\beta_l$.  However, this cannot occur as $l$ is always the smallest index occurring in the combination $\beta_l+\sum_{t\in Q_1, t>l}\beta_t$. Similarly, $k$ is always the smallest index in $R^c$ that is missing from $R_1^c\backslash \{k\}$, meaning that the choice of sets $S$ and $T$ are in one-to-one correspondence with contributing terms in the sums over $R_1$, $Q_1$, $k$ and $l$ in \eqref{eq:bigS}.



\end{document}